\documentclass[12pt, oneside]{amsart}
\pdfoutput=1
\usepackage{geometry}
\geometry{verbose,tmargin=1.25in,bmargin=1.25in,lmargin=1.25in,rmargin=1.25in}
\usepackage{lmodern}
\usepackage{amsmath}
\usepackage{amssymb}
\usepackage{amsthm}
\usepackage{comment}
\usepackage{color}
\usepackage[usenames,dvipsnames]{xcolor}
\usepackage{graphicx}
\usepackage{subfig}
\usepackage{soul}
\usepackage[countmax]{subfloat}
\usepackage{float}
\usepackage{rotfloat}
\usepackage{setspace}
\usepackage{esint}
\onehalfspacing
\usepackage[authoryear]{natbib}
\definecolor{MyDarkBlue}{rgb}{0,0.08,0.45}
\usepackage[unicode=true,pdfusetitle, bookmarks=true,bookmarksnumbered=false,bookmarksopen=false, breaklinks=false,pdfborder={0 0 1},backref=section,colorlinks=true, linkcolor = MyDarkBlue, citecolor = MyDarkBlue]{hyperref}
\usepackage{breakurl}
\usepackage[para]{threeparttable}

\PassOptionsToPackage{normalem}{ulem}
\usepackage{ulem}

\makeatletter

\numberwithin{equation}{section}

\hypersetup{colorlinks=true, linkcolor=MyDarkBlue}

\newtheorem{theorem}{{\bf\sc Theorem}}

\newtheorem{thm}{{\bf \sc Theorem}}[section]

\newtheorem{cor}{{\bf\sc Corollary}}[section]

\providecommand{\E}{\mathrm{E}}
\providecommand{\var}{\mathrm{var}}

\providecommand{\cov}{\mathrm{cov}}

\providecommand{\g}{{\bf g}}
\providecommand{\bfH}{{\bf H}}

\renewenvironment{proof}[1][Proof]{\noindent\text{#1.} }{\ \rule{0.5em}{0.5em}}

\makeatother

\title[Using Gossips to Spread Information]{Using Gossips to Spread Information: \\ Theory and Evidence from a Randomized Controlled Trial
}

\author{Abhijit Banerjee$^{\dagger}$ }
\author{Arun G. Chandrasekhar$^{\ddagger}$}
\author{Esther Duflo$^{\S}$ }
\author{Matthew O. Jackson$^{\star}$ }
\date{This Version: \today}

\thanks{This paper supersedes an earlier paper ``Gossip: Identifying Central Individuals in a Social Network.''
Financial support from the NSF under grants
SES-1156182, SES-1155302, and SES-1629446, and from the AFOSR and DARPA under grant
FA9550-12-1-0411, and from ARO MURI under award No. W911NF-12-1-0509
is gratefully acknowledged. We thank Shobha Dundi, Devika Lakhote,
Tithee Mukhopadhyay, and Gowri Nagraj for outstanding research assistance.
We also thank Michael Dickstein, Ben Golub, John Moore, and participants at various seminars/conferences
for helpful comments.
Social Science Registry AEARCTR-0001770 and approved by MIT IRB COUHES \# 1010004040.
}
\thanks{$^{\dagger}$Department of Economics, MIT}
\thanks{$^{\ddagger}$Department of Economics, Stanford University}
\thanks{$^{\S}$Department of Economics, MIT}
\thanks{$^{\star}$Department of Economics, Stanford University; Santa Fe Institute; and CIFAR}

\begin{document}

\begin{titlepage}
\maketitle
\begin{abstract}
Is it possible to identify individuals who are highly central in a
community without gathering any network information, simply by asking
a few people? If we use people's nominees as seeds for a diffusion
process, will it be successful? We explore these questions theoretically,
via surveys, and via field experiments. We show via a model of information
flow how members of a community can, just by tracking gossip about
others, identify highly central individuals in their network. Asking
villagers in rural Indian villages to name good seeds for diffusion,
we find that they accurately nominate those who are central according
to a measure tailored for diffusion \textendash{} not just those with
many friends or in powerful positions. Finally, we run a randomized
field experiment in 213 other villages that tests how effective it
is to use such nominations as seeds for a diffusion process. Relative
to random seeds or those with high social status, hitting at least
one seed nominated by villagers leads to more than a 65\% increase
in the spread of information.

\textsc{JEL Classification Codes:} D85, D13, L14, O12, Z13

\textsc{Keywords:} Centrality, Gossip, Networks, Diffusion, Influence, Social Learning
\end{abstract}

\end{titlepage}

\section{Introduction}

\begin{quote}
\emph{``The secret of my influence has always been that it remained
secret.''} \textendash{} Salvador Dal\'{i}
\end{quote}
\

To diffuse information by word-of-mouth and influence behavior in
a community, one needs to identify and seed the information via central
individuals.\footnote{See \citet*{katzl1955,rogers1995,kempekt2003,kempekt2005,borgatti2005,ballestercz2006,banerjeecdj2013}.}
Moreover, as shown in \citet*{banerjeecdj2013} and \citet*{beaman2014can},
even though many measures of centrality are correlated, successful
diffusion requires seeding information via people who are central according
to specific measures. A practical challenge is that the relevant centrality
measures are based on extensive network information, which can be
costly and time consuming to collect in many settings.

In this paper, we thus ask and answer a question that has never been
examined before: How can one easily and cheaply identify highly central
individuals without gathering network data? As shown in the work mentioned
above, superficially obvious ``fixes'' for finding central individuals
\textendash{} such as targeting people with leadership or special
status, or who are geographically central, or even those with many
friends \textendash{}
can fail when it comes to diffusing information.
So, how can one find highly central individuals without network data
and in ways that are more effective than relying on status labels?
We explore a direct technique that turns out to be remarkably effective:
simply asking a few individuals in the population who would be the
best individuals for spreading information.

There is ample reason to doubt that such a technique would work. Previous
studies have shown that people's knowledge about the networks in which
they are embedded is surprisingly lacking. In fact, individuals within
a network tend to have little perspective on its structure, as found
in important research by \citet*{friedkin1983} and \citet*{krackhardt1987},
among others.\footnote{See \citet*{krackhardt2014} for background and references.}
This raises the question of whether and how, despite not knowing the
structure of the network in which they are embedded, people know who
is central and well-placed to diffuse information through the network.

In this paper, we examine people's ability to identify highly central
individuals and effective seeds for a diffusion process. We make three
main contributions.

Our first contribution is theoretical. We answer the question of how
it could possibly be that people could name highly diffusion central
individuals without knowing anything about their network. Because
we are interested in diffusion, the feature of the network that we
hope people would have knowledge about is a notion of centrality which
relates to iterative expansion properties of the social network (which
we have defined as ``diffusion centrality''). Needless to say, this
is a complicated concept, and so superficially it may seem implausible
that people could estimate it. Our main theoretical result shows
that there is a very simple argument for why even very naive agents, simply by counting
how often they hear pieces of gossip, would have accurate estimates
of others' centralities. Of course, this is just a possibility result: it is obviously not the only possible way in which people
may learn who is central.\footnote{This result demonstrates what is special about
this notion of centrality. Even without any knowledge of the network,
this is information that individuals can easily get and process.}

To show that individuals \emph{can} learn to identify central individuals
within their community even \emph{without knowing anything about the
structure of their network}, we model a process that we call ``gossip,''
in which nodes generate pieces of information that are stochastically
passed from neighbor to neighbor, along with the identity of the node
from which the information emanated. We assume only that individuals
who hear the gossip are able to keep count of the number of times
that each person in the network is mentioned as a source.\footnote{We use the
term ``gossip'' to refer to the spreading of information
about particular people. Our diffusion process is focused on basic
information that is not subject to the biases or manipulations that
might accompany some ``rumors'' (e.g., see \citet*{blochdk2014}).}
We show that for any listener in the network, the relative ranking
under this count converges over time to the correct ranking of every
node's centrality.\footnote{The specific definition of centrality
we use here is diffusion centrality
\citep{banerjeecdj2013} but a similar result
holds for eigenvector centrality, as is seen in the Appendix.}

Our second contribution is to show that when we ask people in the
field to name good diffusers the nominees are indeed highly central.
To this end, we collected data on who villagers think would be good
at spreading information in 33 villages where we had previously collected
detailed network data. We asked every adult to name the person in
their village best suited to initiate the spread of information. We
show that, indeed, individuals nominate highly diffusion central people.
Nominees consistently rank in the top quartile of centrality, and
many rank in the top decile. We also show that the nominations are
not simply based on the nominee's leadership status or geographic
position in the village, but are significantly correlated with diffusion
centrality even after controlling for these characteristics.

Thus we have reason to believe, both theoretically and empirically,
that people are good at naming highly diffusion-central individuals.
Despite this correlation, it remains possible that individuals identified
by others as good at transmitting information would in fact not be
good seeds.

Our third contribution is then to field-test our method of cheaply
identifying seeds and diffusing information.
To do this we ran a new experiment in 213 new villages not previously
in any of our samples. We asked villagers
who would be a good diffuser of information. In 71 of those villages,
we then used those nominations
to seed information about a give-away of free cell phones. We compare
how well these nominated seeds do compared to another 71 villages
in which we selected seeds who villagers reckon to have high social status,
and yet another 71 villages in which we selected the seeds randomly.

Specifically, in the random seeding villages we seeded a piece information
in 3 to 5 randomly selected households (the number of seeds to be
reached was randomly selected). In the ``social status'' villages,
we seeded information in 3 to 5 village households who have status as
``elders'' in the village \textendash{} leaders with a degree of
authority in the community, who command respect. In the remaining
71 villages, we seeded information in 3 to 5 individuals nominated by
others as being well suited to spread information (``gossip nominees'').
The piece of information that we spread is simple: anyone who calls
a particular phone number will have a chance to win a free cell phone,
and if they do not win the phone, they are guaranteed to win some
cash. The chances to win cash and phones are independent of the number
of people who respond, ensuring that the information is non-rivalrous
and everyone was informed of that fact. The call itself is free. We
then measure the extent of diffusion using the number of independent
entrants.

We received on average 8.1 phone calls in villages with random seedings,
6.9 phone calls in villages with village elder seedings, and 11.7
in villages with gossip seedings. The additional 50 percent participation
rate from using gossip compared to random seedings is the relevant
difference for a policy maker considering whether to use a technique
of asking for ``gossip'' nominations to seed a piece of information.
We also estimate the impact of seeding with ``gossip nominations'':
in quite a few villages with random seedings, a gossip nominee was
hit by chance. We can thus measure how much better ``gossip seeds''
are at circulating information than other seeds. We find that in villages
where no gossip (and no elders) were seeded, we received only 5.8
calls. In villages where at least one gossip was seeded, we received
3.78 more calls, a 65\% increase. If we instrument ``hitting at least
one gossip'' with the gossip treatment, we find a similar result:
seeding at least one gossip seed yields an extra 7.4 calls.

Thus, although call-back rates are moderate, we get about twice as
many entries when we seed information with gossip nominees as compared
to seeding with village elders or with random non-nominated villagers.

To test whether the increase in diffusion from gossip nominees is
in fact accounted for by their diffusion centrality, we went back
to most of the villages with random seeding just after the experiment,
and collected full network data. Consistent with network theory, we
find that information diffuses faster when we hit at least one seed
with high diffusion centrality. However, when we include both gossip
nomination and diffusion centrality of the seeds in the regression,
the coefficient of gossip centrality does not decline much (although
it becomes less precise). This suggests that diffusion centrality
does not explain all of the extra diffusion from gossip nominees.
People's nominations may incorporate additional attributes, such as
who is listened to in the village, or who is most charismatic or talkative,
etc., which goes beyond a nominee's centrality. Alternatively, it
may be that our measure of the network and diffusion centrality are
noisy, and villagers are even more accurate at finding central individuals
than we are.

To summarize, we suggest a process by which, by listening
and keeping count of how often they hear \emph{about} someone, individuals
learn the correct ranking of community members in terms of how effectively
they can spread information.  And, we show that, in practice, individuals
nominated by others are indeed effective seeds of information.

\subsection*{Contribution and Relation to the literature}

There is a voluminous literature on the role of opinion leaders and
key individuals in diffusing products and information. This ranges
from the early sociology literature (e.g, classic studies by \citet*{simmel1908,katzl1955,colemankm1966}),
to the vast literature on diffusion of innovations (e.g., \citet*{rogers1995,centola2010,centola2011,jacksony2011}),
to appropriate measures of centrality (e.g., \citet*{bonacich1987,borgatti2005,ballestercz2006,valente2008,limot2015,blochjt2016}),
to a literature on identifying central and influential individuals
in marketing (e.g., \citet{krackhardt1996,iyengarvv2010,hinzetal2011,katonazs2011}),
to the computational issues of identifying multiple individuals for
seeding (e.g., \citep*{kempekt2003,kempekt2005}).

To our knowledge, this is the first paper to demonstrate that members
of communities are able, easily and accurately, to nominate people
in the community who are good at diffusing information, and that these
nominees are highly central in a network sense. It is important to
emphasize that this is very distinct from using the friendship paradox
(\citet{feld1991}) to find high-degree individuals. That is, since
high-degree individuals have more friends than low-degree people,
a standard way of finding high-degree individuals is simply to ask
people to name their friends (e.g., see \citet{krackhardt1996,kimetal2015,jackson2016}).
Here, we are trying to find people who are central in terms of measures
that are more complex than degree-centrality, and the theory and techniques
we develop are correspondingly different from the standard approaches
in viral marketing.

Our work is also the first to describe a simple process by which people
can learn things about their broader network to which they have no
direct access.\footnote{There are some papers (e.g., \citet{milgram1967} and \citet{doddsmw2003})
that have checked people's abilities to use knowledge of their friends'
connections to efficiently route messages to reach distant people;
those papers, however, test knowledge about peoples' own connections.} Our results have important practical consequences, since policy makers
and businesses are often looking for the best way to spread information,
and asking people to identify the best person to spread the information
is cheaper and easier than collecting detailed network data.

The centrality measure that we work with is the notion of ``diffusion
centrality'' that we defined in \citep*{banerjeecdj2013} and found
to significantly predict successful diffusion seedings. Diffusion
centrality measures how widely information from a given node diffuses
in a given number of time periods and for a given random per-period
transmission probability. In supporting materials, we prove that this measure
of centrality nests three of the most prominent measures in the literature:
degree centrality at one extreme (if there is just one time period
of communication), and eigenvector centrality and Katz\textendash Bonacich
centrality at the other extreme (if there are unlimited periods of
communication). For intermediate numbers of periods, diffusion centrality
takes on a range of other values.

It is important to emphasize that even though we work with a centrality
measure that we defined in earlier work, our perspective in this paper
is completely different. 
Our previous work explored which measure of centrality best predicts
diffusion, and whether microfinance participation depends on peer
endorsement. 
Here we examine how we can leverage people's knowledge to identify
highly central individuals, without relying on any network information.
This question is important for at least two reasons: for developing
cost-effective methods of diffusing information and for understanding
what people know about the networks in which they are embedded.

There are two limitations that are worth highlighting and discussing.
First this paper focuses on the pure transmission of information -
simple knowledge that is either known or not. In some applications,
people may not only need to know of an opportunity but may also be
unsure of whether they wish to take advantage of that opportunity,
and thus may also rely on endorsements of others. In those cases,
trust in the sender will also matter in the diffusion process. We
focus, for most of the paper, on the spread of simple sorts of information,
and in the experiment, the piece of information we seeded was designed
not to require trust in order to participate. Although issues of trust
are certainly relevant in some applications, pure lack of information
is often a binding and important constraint, and is therefore worthy
of study. In addition, in our work on microfinance \citep*{banerjeecdj2013},
for example, we could not reject the hypothesis that the role of the
social network in the take up of microfinance was entirely mediated
by information transmission, and that endorsement played no role.

Second, our experiments here are limited to communities on the order
of a thousand people. It is clear that peoples' abilities to name
highly central individuals may not scale fully to networks that involve
hundreds of thousands or millions of people. Nonetheless, our work
still demonstrates that people are effective at naming central people
within reasonably sized communities. There are many settings, in both
the developing and developed world, in which person-to-person communication
within a community, company, department, or organization of limited
scale is important. Our model and empirical findings are therefore
a useful first step in a broader research agenda.\footnote{More generally,
one may want to choose many seeds in a large society, some within in each of various
sub-communities, in which case the techniques developed here would still be useful.}

The remainder of the paper is organized as follows: Section \ref{sec:model}
develops our model of diffusion. In Section \ref{sec:gossip}, we
relate the notion of diffusion centrality to network gossip. Section
\ref{sec:data} describes the setting and the data used in the empirical
analysis. We examine whether individuals nominate central nodes in
Section \ref{sec:results}. In Section \ref{sec:experiment}, we describe
the field experiment and results. Section \ref{sec:conclusion} concludes.

\

\section{A Model of Network Communication}\label{sec:model}

We consider the following model.

\subsection{A Network of Individuals}

\

A society of $n$ individuals are connected via a possibly directed
and weighted network, which has an adjacency matrix ${\bf g}\in[0,1]^{n\times n}$.\footnote{When defining $\g$ in the directed case, the $ij$-th entry
indicates that $i$ can tell something to $j$. In some networks, this
may not be reciprocal. }
Unless otherwise stated, we take the network $\g$ to be fixed and
let $v^{(R,1)}$ be its first (right-hand) eigenvector, corresponding
to the largest eigenvalue $\lambda_{1}$.\footnote{$v^{(R,1)}$ is such that ${\bf g}{v}^{(R,1)}=\lambda_{1}{v}^{(R,1)}$
where $\lambda_{1}$ is the largest eigenvalue of ${\bf g}$ in magnitude.} The first eigenvector is nonnegative and real-valued by the Perron\textendash Frobenius
Theorem. Throughout what follows, we assume that the network is (strongly)
connected in that there exists a (directed) path from every node to
every other node, so that information originating at any node could
potentially make its way eventually to any other node.\footnote{More generally, everything that we say applies to components of the
network.}

Two concepts, \emph{diffusion centrality and network gossip }will
be central to the theory developed here. We introduce them one by
one and then show how they are connected.

\subsection{Diffusion Centrality}

\label{sec:properties}

\

In \citet*{banerjeecdj2013}, we defined a notion of centrality called
\emph{diffusion centrality}, based on random information flow through
a network according to the following process, which is a variant of
the standard process that underlies many models of contagion.\footnote{See \citet{jacksony2011} for background and references. A continuous
time version of diffusion centrality was subsequently defined in \citet{lawyer2014}.}


A piece of information is initiated at node $i$ and then broadcast
outwards from that node. In each period, with probability $q\in(0,1]$,
independently across neighbors and history, each informed node informs
each of its neighbors of the piece of information and the identity
of its original source.\footnote{Note that since we allow $\g$ to be a fully heterogeneous matrix
(a weighted and directed graph), $q$ is redundant. However, for the
purposes of relating the theory to our empirical work, it is useful
to think of $\g$ as an unweighted graph, since survey network data
often just indicates whether households have a connection. For this
reason, we include the $q$-parameter explicitly, as it will be relevant
for our empirical exercise and also lead to new insights about how
diffusion centrality behaves as the communication rate varies. But
in the Appendix and online materials,
all of our results and proofs allow for an arbitrary
weighted and directed graph, and thus full heterogeneity in the probability
that two nodes interact, in which case $q$ is obviously redundant.} The process operates for $T$ periods, where $T$ is a positive integer.

We emphasize that there are good reasons to allow $T$ to be finite.
For instance, a new piece of information may only be relevant for
a limited time. Also, after some time, boredom may set in or some
other news may arrive and the topic of conversation may change.

Diffusion centrality measures how extensively the information spreads
as a function of the initial node. In particular, let
\[
{\bf H}({\bf g};q,T):=\sum_{t=1}^{T}\left(q{\bf g}\right)^{t},
\]
be the ``hearing matrix.'' The $ij$-th entry of ${\bf H}$, $H({\bf g};q,T)_{ij}$,
is the expected number of times, within $T$ periods, that $j$ hears
about a piece of information originating from $i$. Diffusion centrality
is then defined by
\[
{DC}({\bf g};q,T):={\bf H}({\bf g};q,T)\cdot{\bf 1}=\left(\sum_{t=1}^{T}\left(q{\bf g}\right)^{t}\right)\cdot{\bf 1}.
\]

So, ${DC}({\bf g};q,T)_{i}$ is the expected total number of times
that some piece of information that originates from $i$ is heard
by any of the members of the society during a $T$-period time interval.\footnote{We note two useful normalizations. One is to compare this calculation
to what would happen if $q=1$ and ${\bf g}$ were the complete network
${\bf g}^{c}$, which produces the maximum possible entry for each
$ij$ for any given any $T$. Thus, each entry of ${DC}({\bf g};q,T)$
could be divided through by the corresponding entry of ${DC}({\bf g}^{c};1,T)$.
This produces a measure for which every entry lies between 0 and 1,
where 1 corresponds to the maximum possible number of expected walks
possible in $T$ periods with full probability weight and full connectedness.
Another normalization is to compare a given node to the total level
for all nodes; that is, to divide all entries of ${DC}({\bf g};q,T)$
by $\sum_{i}{DC}_{i}({\bf g};q,T)$. This normalization tracks how
relatively diffusive one node is compared to the average diffusiveness
in its society.} \citet{banerjeecdj2013} showed that diffusion centrality of the
initially informed members of a community was a statistically significant
predictor of the spread of information \textendash{} in that case,
about a microfinance program.

It is useful to remind the reader of diffusion centrality's relationship
to other prominent measures of centrality, though a reader impatient
to see our main results is welcome to bypass this.

As we stated in \citet{banerjeecdj2013}, for different values of
$T$, diffusion centrality nests three of the most prominent and widely
used centrality measures: degree centrality, eigenvector centrality,
and Katz\textendash Bonacich centrality.\footnote{Let $d({\bf g})$
denote (out) degree centrality: $d_{i}({\bf g})=\sum_{j}g_{ij}$.
Eigenvector centrality corresponds to $v^{(R,1)}({\bf g})$: the first
eigenvector of ${\bf g}$. Also, let $KB({\bf g},q)$ denote Katz\textendash Bonacich
centrality \textendash{} defined for $q<1/\lambda_{1}$ by $KB({\bf g},q):=\left(\sum_{t=1}^{\infty}\left(q{\bf g}\right)^{t}\right)\cdot{\bf 1}.$} It thus provides a foundation for these measures and spans the gap
between them.

In particular, it is straightforward to show that (i) diffusion centrality
is proportional to (out) degree centrality at the extreme at which
$T=1$, and (ii) if $q<1/\lambda_{1}$, then diffusion centrality
coincides with Katz\textendash Bonacich centrality if we set $T=\infty$.
It takes more work to show that, when $q>1/\lambda_{1}$, diffusion
centrality approaches eigenvector centrality as $T$ approaches $\infty$.
Intuitively, the difference between the extremes of Katz\textendash Bonacich
centrality and eigenvector centrality depends on whether $q$ is sufficiently
small so that limited diffusion takes place even for large $T$, or
whether $q$ is sufficiently large so that the knowledge saturates
the network and then it is relative amounts of saturation that
are captured by this measure.

The exact threshold makes sense:
whether $q$ is above or below $1/\lambda_{1}$ determines whether
the sum in diffusion centrality converges or diverges \textendash{}
and, as we know from spectral theory the first eigenvalue of a matrix
governs its expansion properties. For completeness, a formal statement
and proof of these results appears in the Appendix \ref{sec:proofs}.

Interestingly, the same threshold for $q$ plays an important role
even when $T$ is finite. In the Appendix, we provide new theoretical
results on diffusion centrality that show that diffusion centrality
behaves fundamentally differently depending on whether $q$ is above
or below $1/\lambda_{1}$ for reasons similar to those discussed
already. We also show that diffusion centrality behaves quite differently
depending on whether $T$ is smaller or bigger than the diameter of the graph.
The reason is that in many large graphs, the average
distance between most nodes is actually almost the same as the diameter,
something first discovered by Erdos and Renyi. Thus, if $T$ is below
the diameter, news from any typical node will not have a long enough
time to reach most other nodes. In contrast, once $T$ hits the diameter,
then that permits news from any typical node to reach most others.
When $T$ exceeds the diameter of the graph, then many of the walks counted
by ${\bf g}^{T}$ begin to have ``echoes'' in them: they visit some
nodes twice. For instance, news passing from node 1 to node 2 to node
3 then back to node 2 and then to node 4, etc. Once most walks have
echoes in them, the measure begins to act differently,
and the diffusion centrality vector eventually converges
to the ergodic distribution, and essentially  the first
eigenvector (provided $q$ is large enough to get saturation).

These results are formally proved in the Appendix \ref{sec:Echoes}. From the point
of view of the empirical exercises that are at the heart of this paper,
these results are very useful because they suggest that the threshold
case of $q=1/\E[\lambda_{1}]$ and $T=\E[Diam(\g)]$ provides a natural
benchmark value for $q$ and $T$. This allows us to assign numerical
values to ${DC}({\bf g};q,T)_{i}$.

\subsection{Network Gossip}

Diffusion centrality considers diffusion from the \emph{sender's}
perspective. Let us now consider the same information diffusion process
but from a \emph{receiver's} perspective. Over time, each individual
hears information that originates from different sources in the network,
and in turn passes that information on with some probability. The
society discusses each of these pieces of information for $T$ periods.
The key point is that there are many such topics of conversation,
originating from all of the different individuals in the society,
with each topic being passed along for $T$ periods.

For instance, Arun may tell Matt that he has a new car. Matt may then
tell Abhijit that ``Arun has a new car,'' and then Abhijit may tell
Esther that ``Arun has a new car.'' Arun may also have told Ben
that he thinks house prices will go up, and Ben could have told Esther
that ``Arun thinks that house prices will go up.'' In this model,
Esther keeps track of the cumulative number of times bits of information
that originated from Arun reach her and compares it with the number
of times she hears bits of information that originated from other
people. What is crucial, therefore, is that the news involves the
name of the node of origin \textendash{} in this case ``Arun'' \textendash{}
and not what the information is about. The first piece of news originating
from Arun could be about something he has done (``bought a car''),
but the second could just be an opinion (``Arun thinks house prices
will go up''). Esther keeps track of how often she hears of things
originating from Arun. Then Esther imputes peoples' centralities based
on how often she hears about them. She estimates Abhijit's, Arun's,
Ben's, Matt's, ..., Sara's centralities just based on the frequency
that she hears things that originated at each one of them.\footnote{Of course, one can imagine all kinds of gossip processes and could
enrich the model along many dimensions. The point here is simply to
provide a ``possibility'' result - to understand how it could be
that people can easily learn information about the centrality of others.
Noising up the model could noise up people's knowledge of others'
centralities, but this benchmark gives us a starting point.}

Recall that
\[
{\bf H}({\bf g};q,T)=\sum_{t=1}^{T}\left(q{\bf g}\right)^{t},
\]
is such that the $ij$-th entry, $H({\bf g};q,T)_{ij}$, is the expected
number of times $j$ hears a piece of information originating from
$i$.

We define the \emph{network gossip heard} by node $j$ to be the $j$-th
column of ${\bf H}$,
\[
NG({\bf g};q,T)_{j}:=H({\bf g};q,T)_{\cdot j}.
\]
Thus, $NG_{j}$ lists the expected number of times a node $j$ will
hear a given piece of news as a function of the node of origin of
the information. So, if $NG({\bf g};q,T)_{ij}$ is twice as high as
$NG({\bf g};q,T)_{kj}$ then $j$ is expected to hear news twice as
often that originated at node $i$ compared to node $k$, presuming
equal rates of news originating at $i$ and $k$.

Note the different perspectives of $DC$ and $NG$: diffusion centrality
tracks how well information spreads from a given node, while network
gossip tracks relatively how often a given node hears information
from (or about) each of the other nodes.

\section{Relating Diffusion Centrality to Network Gossip}\label{sec:gossip}

We now turn to the first of our main results.

We first investigate whether and how individuals living in network
${\bf g}$ can end up with knowledge of other peoples' diffusion centralities,
without knowing anything about the network structure.

.

\subsection{Identifying Central Individuals}

\

With these two measures of diffusion centrality and network gossip in
hand, we show how individuals in a society can estimate who is central
simply by counting how often they hear gossip that originated with
others. We first show that, on average, individuals' rankings of others
based on $NG_{j}$, the amount of gossip that $j$ has heard about
others, is positively correlated with diffusion centrality for any
$q,T$.

\begin{theorem} \label{prop:cov} For any $({\bf g};q,T)$,
\[
\sum_{j}\cov(DC({\bf g};q,T),NG({\bf g};q,T)_{j})=\var(DC({\bf g};q,T)).
\]
Thus, in any network with differences in diffusion centrality among
individuals, the average covariance between diffusion centrality and
network gossip is positive. \end{theorem}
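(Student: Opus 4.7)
The plan is to reduce the identity to bilinearity of covariance, using the fact that diffusion centrality is a row sum of the hearing matrix ${\bf H}$ while the network gossip vectors are its columns. No heavy machinery should be needed: the proof is essentially a one-line algebraic rearrangement.

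First I would unpack the notation. The $(i,j)$ entry $H_{ij}:={\bf H}({\bf g};p,T)_{ij}$ is the expected number of times $j$ hears a piece of information originating at $i$. By definition, $DC_i=\sum_j H_{ij}$ (sum across $j$), while $(NG_j)_i = H_{ij}$ (the $j$-th column, indexed by the source $i$). Both $DC$ and $NG_j$ are thus vectors in $\Re^n$ indexed by the source node $i$. The key structural observation is that summing the network gossip vectors over listeners recovers diffusion centrality:
\[
\Bigl(\textstyle\sum_{j} NG_j\Bigr)_i \;=\; \sum_j H_{ij} \;=\; DC_i,
\]
so $\sum_j NG_j = DC$ coordinatewise.

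Second, I would regard $DC$ and each $NG_j$ as real-valued functions on $\{1,\dots,n\}$ equipped with the uniform probability measure, so that $\cov$ and $\var$ denote the ordinary sample covariance and variance. Bilinearity of covariance in its second argument then gives
\[
\sum_{j}\cov\bigl(DC,\,NG_j\bigr) \;=\; \cov\!\Bigl(DC,\ \textstyle\sum_j NG_j\Bigr) \;=\; \cov(DC,DC)\;=\;\var(DC),
\]
which is the claimed identity. The second sentence of the theorem is immediate: dividing by $n$, the average covariance equals $\tfrac{1}{n}\var(DC)\geq 0$, with strict inequality whenever the $DC_i$ are not all equal.

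There is no real obstacle: the only step that requires any thought is noticing the duality $\sum_j NG_j = DC$ between row sums and column indexing of ${\bf H}$, and the only housekeeping is to fix the convention for $\cov$ and $\var$ (sample moments over the uniform measure on nodes). Once those are in place, the identity collapses by bilinearity in one line.
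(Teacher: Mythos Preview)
Your proof is correct and follows essentially the same approach as the paper: both arguments hinge on the observation that $\sum_j NG_j = DC$ (equivalently $\sum_j H_{ij}=DC_i$) and then collapse the sum of covariances to $\var(DC)$. The paper simply writes out the sums explicitly rather than invoking bilinearity in the abstract, but the content is identical.
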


It is important to emphasize that although both measures, network
gossip and diffusion centrality, are based on the same sort of information
process, they are really two quite different objects. Diffusion centrality
is a gauge of a node's ability to \textsl{send} information, while
the network gossip measure tracks the \textsl{reception} of information
about different nodes. Indeed, the reason that Theorem \ref{prop:cov}
is only stated for the sum, rather than any particular individual
$j$'s network gossip measure, is that for small $T$ it is possible
that some nodes have not even heard about other relatively distant
nodes, and moreover, they might be biased towards their local neighborhoods.\footnote{ One might conjecture that more central nodes would be better ``listeners'':
for instance, having more accurate rankings than less central listeners
after a small number of periods. Although this might happen in some
networks, and for many comparisons, it is not guaranteed. None of
the centrality measures considered here ensure that a given node,
even the most central node, is positioned in a way to ``listen''
uniformly better than all other less central nodes. Typically, even
a most central node might be farther than some less central node from
some other important nodes. This can lead a less central node to hear
some things before even the most central node, and thus to have a
clearer ranking of at least some of the network before the most central
node. Thus, for small $T$, the $\sum$ is important in Theorem \ref{prop:cov}. }

Next, we show that if individuals exchange gossip over extended periods
of time, every individual in the network is eventually able to \emph{perfectly}
rank others' centralities \textendash{} not just ordinally, but cardinally.

\begin{theorem} \label{prop:rank} If $q\geq1/\lambda_{1}$ and $\g$
is aperiodic, then as $T\rightarrow\infty$ every individual $j$'s
ranking of others under $NG({\bf g};q,T)_{j}$ converges to be proportional
to diffusion centrality, $DC({\bf g};q,T)$, and hence according to
eigenvector centrality, $v^{(R,1)}$. \end{theorem}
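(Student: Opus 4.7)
The plan is to mimic the argument used for part (2) of Theorem \ref{DC}, but applied column-by-column to the hearing matrix $\bfH(\g; p, T)$ rather than to its row sums. By definition $NG(\g; p, T)_{ij} = \sum_{t=1}^{T} p^t (\g^t)_{ij}$, so it suffices to show that the column vector $NG(\g; p, T)_{j}$, after dividing by $\sum_{t=1}^{T}(p\lambda_1)^t$, converges to a strictly positive scalar multiple of the Perron eigenvector $\bfv^{(1)}$. Since the ranking induced by a vector is invariant under multiplication by a positive scalar, the ranking that $NG(\g; p, T)_{j}$ induces across $i$ will then agree in the limit with the ranking induced by $\bfv^{(1)}$, which by Theorem \ref{DC}(2) is also the limiting ranking induced by $DC(\g; p, T)$.

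The key input is Perron-Frobenius: because the network is strongly connected, $\g$ is irreducible and nonnegative, $\lambda_1 > 0$ is a simple eigenvalue, and there are strictly positive right and left Perron eigenvectors $\bfv^{(1)}$ and $u^{(1)}$. In the aperiodic case, for each $i,j$ one has $\lambda_1^{-t}(\g^t)_{ij} \to \bfv^{(1)}_i \, u^{(1)}_j / \langle u^{(1)}, \bfv^{(1)}\rangle$ as $t \to \infty$; in the periodic case the same limit holds in the Cesaro sense. Writing $p^t(\g^t)_{ij} = (p\lambda_1)^t \cdot \lambda_1^{-t}(\g^t)_{ij}$, the normalized $NG$ becomes a weighted average of $\lambda_1^{-t}(\g^t)_{ij}$ with weights $(p\lambda_1)^t$: geometric and tail-concentrated when $p\lambda_1 > 1$, uniform when $p\lambda_1 = 1$. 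A standard Abel-type argument (or direct Cesaro convergence at the boundary) then yields
\[
\frac{NG(\g; p, T)_{ij}}{\sum_{t=1}^{T}(p\lambda_1)^t} \;\longrightarrow\; \frac{u^{(1)}_j}{\langle u^{(1)}, \bfv^{(1)}\rangle}\,\bfv^{(1)}_i .
\]
Fixing $j$ and varying $i$, this limit is a strictly positive multiple of $\bfv^{(1)}$ (since $u^{(1)}_j > 0$), so whenever $\bfv^{(1)}_i > \bfv^{(1)}_{i'}$ we have $NG(\g; p, T)_{ij} > NG(\g; p, T)_{i'j}$ for all sufficiently large $T$. This gives the claimed agreement of $j$'s gossip-based ranking with the rankings induced by $DC$ and $\bfv^{(1)}$.

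The main obstacle is not the main line of argument, which is essentially Perron-Frobenius combined with Abel summation, but rather handling two boundary subtleties together: the Cesaro nature of the limit when $\g$ is periodic, and the degeneration of the geometric normalizing weights to uniform weights at the boundary $p\lambda_1 = 1$. Both are dispatched by the observation that a weighted average with weights $(p\lambda_1)^t$ of a Cesaro-convergent sequence still converges to the same Cesaro limit; some additional care is then needed only to verify that strict inequalities in the entries of $\bfv^{(1)}$ are preserved for all large $T$, which is exactly what the ranking statement requires.
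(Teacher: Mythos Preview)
Your overall approach matches the paper's: normalize the $j$-th column of $\bfH$ by $\sum_{t=1}^{T}(p\lambda_1)^t$ and show it converges to a strictly positive multiple of $\bfv^{(1)}$, so that the induced ranking over $i$ eventually agrees with eigenvector centrality. The paper carries this out via explicit diagonalization---reducing first to positive, diagonalizable $\g$ so that $|\lambda_2|<\lambda_1$, and then extending by density---while you instead invoke the Perron--Frobenius limit $\lambda_1^{-t}(\g^t)_{ij}\to \bfv^{(1)}_i u^{(1)}_j/\langle u^{(1)},\bfv^{(1)}\rangle$ together with an Abel-type averaging argument. These are closely related routes to the same conclusion.

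There is, however, a genuine gap in your treatment of the periodic case. The assertion that ``a weighted average with weights $(p\lambda_1)^t$ of a Cesaro-convergent sequence still converges to the same Cesaro limit'' is false when $p\lambda_1>1$. Take $a_t=1$ for $t$ even and $a_t=0$ for $t$ odd: the Cesaro limit is $1/2$, but with $r:=p\lambda_1>1$ the weighted average $\bigl(\sum_{t\le T} r^t a_t\bigr)\big/\bigl(\sum_{t\le T} r^t\bigr)$ oscillates between roughly $r/(r+1)$ and $1/(r+1)$ as the parity of $T$ changes, and so does not converge. This is precisely the behavior that a periodic $\g$ induces in $\lambda_1^{-t}(\g^t)_{ij}$; concretely, in a star graph with $p=1$, a leaf $j$'s column $H_{\cdot,j}$ ties the center with the other leaves for every even $T$, even though their eigenvector centralities differ strictly. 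Your argument \emph{is} correct for primitive (aperiodic) $\g$, where the sequence converges pointwise and any tail-concentrating weighted average inherits the limit; the paper likewise confines its explicit computation to positive (hence primitive) $\g$ and then appeals to density for the general case. So the gap is confined to the periodic corner, but you should not claim to have dispatched it via the Cesaro/Abel observation.
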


The intuition is that individuals hear (exponentially) more often
about those who are more diffusion/eigenvector central, as the number
of rounds of communication tends to infinity. Hence, in the limit,
they assess the rankings according to diffusion/eigenvector centrality
correctly. The result implies that even with very little computational
ability beyond remembering counts and adding to them, agents can come
to learn arbitrarily accurately complex measures of the centrality
of everyone in the network, including those with whom they do not
associate.

It is worth emphasizing that although in the above we do not allow
the probability of transmitting information to depend on the person,
this is only for presentation purposes. As noted above, in the Appendix
we work with a fully weighted and directed graph $\g$ which allows
the probability of transmission to vary arbitrarily by pair. Our results
still hold exactly, substituting a condition on the first eigenvalue
of ${\bf g}$ being bigger than 1 in place of the comparison between
$q$ and $1/\lambda_{1}$ in the case presented here. Thus, people
who are effective (or whose friends are effective) at communicating
information would be heard about a lot, and information communicated
to them would also circulate effectively - being fully accounted for
both in diffusion centrality and the network gossip measure.

It is possible that more sophisticated strategies where individuals
try to infer network topology, could accelerate learning. Nonetheless,
what our result underscores is that learning is possible even in an
environment where individuals do not know the structure of the network
and do not tag anything but the source of the information.

The restriction to $q\geq1/\lambda_{1}$ is important. When $q$ tends
to 0, individuals hear about others in the network with vanishing
frequency, and as a result, the network distance between people can
influence who they think is the most important.

Also, nodes are similar in how frequently they generate new
information or gossip. However,
provided the generation rate of new information is positively related
to nodes' centralities, the results that we present below still hold
(and, in fact, the speed of convergence could increase), though
of course if the rate of generation of information about nodes is
negatively correlated with their position, then our results below
would be attenuated. Regardless, the result is still of interest.

We also rule out hearing about people in other ways than through communication
with friends: information only travels through edges in the network.
This is realistic in the contexts we study. Note, however, that things
like media outlets are easily treated as nodes in the network, especially
given that our analysis allows for arbitrarily weighted and directed networks.

Also, agents do not doubt the quality of information, either
in the gossip process or in the information transmission process:
there is no notion of trust, or endorsement. It could be, for example,
that gossips are people who love to talk but are not necessarily reliable.
In that case, their friends may resist passing on information originating
from them even though they themselves may be much talked about. This
could be of interest in some settings and is an interesting issue
for further research.

\

\section{Evidence: who are the gossips?}

We now move to the second of our main results.

\subsection{Data Collection}\label{sec:data}

\

As an empirical study of people's ability to nominate central individuals, we use a rich
network data set that we gathered from villages in rural Karnataka (India).
We collected network data in 2006 in order to study the spread of their microfinance product \citep{banerjeecdj2013}. We again collected network data in 2012, which is the data we use here.

We use the the network data combined
with ``gossip'' information from 33 villages. To collect the network data (described in detail in \citet*{banerjeecdj2013},
and publicly available at \url{http://economics.mit.edu/faculty/eduflo/social}),
we asked adults to name those with whom they interact in the course
of daily activities.\footnote{We have network data from 89.14\% of the 16,476 households based on
interviews with 65\% of all adult individuals aged 18 to 55.  This is a new wave of data relative to our
original microfinance study.}  We have data concerning 12 types of interactions for a given survey
respondent: (1) whose houses he or she visits, (2) who visits his
or her house, (3) his or her relatives in the village, (4) non-relatives
who socialize with him or her, (5) who gives him or her medical help,
(6) from whom he or she borrows money, (7) to whom he or she lends
money, (8) from whom he or she borrows material goods (e.g., kerosene,
rice), (9) to whom he or she lends material goods, (10) from whom
he or she gets important advice, (11) to whom he or she gives advice, and
(12) with whom he or she goes to pray (e.g., at a temple, church or
mosque). Using these data, we construct one network for each village,
at the household level, where a link exists between households if any
member of either household is linked to any other member of the other
household in at least one of the 12 ways. Individuals can communicate
if they interact in any of the 12 ways, so this is the network of
potential communications, and using this network avoids the selection
bias associated with data-mining to find the most predictive subnetworks.
The resulting objects are undirected, unweighted networks at the household
level.

After the network data were collected, to collect gossip data, we asked the
adults the following two additional questions:
\begin{itemize}
\item[(Event)] \emph{If we want to spread information to everyone in the village
about tickets to a music event, drama, or fair that we would like
to organize in your village, to whom should we speak?}

\

\item[(Loan)] \emph{If we want to spread information about a new loan product to
everyone in your village, to whom do you suggest we speak?}
\end{itemize}

We asked two questions to check whether there was
any difference between depending on what the people thought was being diffused.  It made no difference, as is clear from the results below.

Table \ref{tab:summary} provides summary statistics.
The networks are sparse: the average number of households in
a village is 196 with a standard deviation of 61.7, while the average
degree is 17.7 with a standard deviation of 9.8.

Only half of the households were willing to respond to our  ``gossip'' questions.
This is in itself intriguing. Perhaps people are unwilling to offer
an opinion when they are unsure of the answer.\footnote{See \citet{alatas2012network} for a model that builds on this idea.}
They might instead have been worried about singling someone out.

Conditional on naming someone, however, there is substantial concordance
of opinion. Only 4\% of households were nominated in response to the
Event question (and 5\% for the Loan question) with a cross-village
standard deviation of 2\%. Conditional on being nominated, the median
household was nominated nine times.\footnote{We work at the household level, in keeping with \citet{banerjeecdj2013}
who used households as network nodes; a household receives a nomination
if any of its members are nominated.} This is perhaps a first indication that the answers may be meaningful,
since if people are good at identifying central individuals, we would
expect their nominations to coincide.

In this data set, we label as ``leaders'' households that contain shopkeepers, teachers, and leaders of self-help
groups -- almost 12 percent
of households fall into this category. This was how the bank in our microfinance study defined
leaders, who were identified as people to be seeded with information
about their product (because it was believed they would be good at transmitting the
information).  The bank's theory was that such leaders were
 likely to be well-connected in the villages and
thereby would contribute to more diffusion of microfinance.\footnote{In our earlier work, \citet{banerjeecdj2013}, we show that there
is considerable variation in the centrality of these ``leaders''
in a network sense, and that this variation predicts the eventual
take up of microfinance.}

There is some overlap between leaders and gossip nominees. We refer to the nominees as ``gossips.''  Overall, 86\% of the population were neither gossips nor leaders, just 1\% were both, 3\% were nominated but not leaders, and 11\% were leaders but not nominated. This means that 8\% of leaders were nominated as a gossip under the event question whereas 92\% where not nominated. Similarly, 25\% of nominated gossips under the event question were leaders, whereas 75\% were not. The loan question demonstrates very similar results, and Figure \ref{fig:cdfCentrality} presents this information.

\subsection{Do individuals nominate central nodes?}
\label{sec:results}

\

Our theoretical results suggest that people can learn others' diffusion
centralities simply by tracking news that they hear
through the network, and therefore should be able to name central individuals
when asked whom to use as a ``seed'' for diffusion. In this section,
we examine whether this is the case.

\subsubsection{Data description}

\

As motivating evidence, Figure \ref{fig:cdfCentrality} shows the distribution of diffusion centrality
(normalized by its standard deviation across the sample for interpretability)
across households that were nominated for
 the event question, those who were nominated as leaders, and those who were named for both or neither. Very clearly,  the distribution of centrality of those who are both nominated and are also leaders first order stochastically dominates the other distributions. Moreover, the distribution of centralities of those who are nominated but not leaders dominates the distribution of those who are leaders but  were not nominated. Finally, those who are neither nominated nor a leader exhibit a distribution that is dominated by the rest. Taken together, this shows that individuals who are both nominated and leaders tend to be more central than those who are nominated but not leaders, who are in turn more central than those who are not nominated but are leaders.

Figure \ref{fig:neighborhood} presents the distribution of nominations
as a function of the network distance from a given household. If information
did not travel well through the social network, we might imagine that
individuals would only nominate households with whom they are directly
connected. Panel A of Figure \ref{fig:neighborhood} shows that fewer
than 13\% of individuals nominate someone within their direct neighborhood,
compared to about 9\% of nodes within this category. At the same time,
over 28\% of nominations come from a network distance of at least
three or more (41\% of nodes are in this category). Therefore, although respondents tend to nominate people who are closer to them than the average
person in the village, they are also quite likely to nominate someone
who is far away. Moreover, it is important to note that highly central
individuals are generally closer to people than the typical household
(since they have many friends -- the famous ``friendship paradox''),
so it does make sense that people tend to nominate individuals who
are closer to them. Taken together, this suggests that information
about centrality does indeed travel through the network.

Panel B of Figure \ref{fig:neighborhood} shows that
the average diffusion centrality in percentile terms of those named at distance
1 is higher than of those named at distance 2, which is higher than of those named at distance 3 or more.
This suggests that individuals have more accurate information about central individuals that are closer to them, and when they don't, they are careful not to nominate (recall that fewer than half of the households nominate anyone).

\subsubsection{Regression Analysis}
\label{sec:prediction}

\

Motivated by this evidence, we present a more systematic analysis
of the correlates of nominations, using a discrete choice framework
for the decision to nominate someone.

Our theory suggests that if people choose whom to nominate based on
who they hear about most frequently, then diffusion centrality should
be a leading predictor of nominations. While the aforementioned results
are consistent with this prediction, there are several plausible alternative
interpretations that do not rely on the information mechanism from our model.
For example, individuals may nominate the person
who has the most friends, and people with many friends tend to be more
diffusion central than those with fewer friends (i.e., diffusion centrality
and degree centrality are correlated). Alternatively,
it may be that people simply nominate the ``leaders'' within their
village, or people who are central geographically, and these also
correlate with diffusion/eigenvector centrality. There are reasons to think that leadership status and geography may
be good predictors of network centrality, since, as noted in \citet{banerjeecdj2013},
the microfinance organization selected ``leaders'' precisely because
they believed these people would be central. Previous
research has also shown that geographic proximity increases the probability
of link formation \citep{fafchampsg2007b,ambrusms2012,chandrasekharl2014}
and therefore, one might expect geographic data to be a useful predictor
of centrality. For that reason, since in addition to leadership data we have detailed  GPS
coordinates for every household in each village, we include these
in our analysis below as controls.\footnote{To operationalize geographic centrality, we use two measures. The
first uses the center of mass. We compute the center of mass and then
compute the geographic distance for each agent $i$ from the center
of mass. Centrality is the inverse of this distance, which we normalize
by the standard deviation of this measure by village. The second uses
the geographic data to construct an adjacency matrix. We denote the
$ij$ entry of this matrix to be $\frac{1}{d(i,j)}$ where $d(\cdot,\cdot)$
is the geographic distance. Given this weighted graph, we compute
the eigenvector centrality measure associated with this network. Results
are robust to either definition.}

We recognize that the correlations below do not constitute
proof that the causal mechanism is indeed gossip, but they do rule
out these obvious confounding factors.

To operationalize our analysis
we use $DC\left(1/\E[\lambda_{1}],\E[Diam(\g(n,p))]\right)$ as our
measure of diffusion centrality, as discussed in Section \ref{sec:properties}.

We estimate a discrete choice model of the decision to nominate an
individual. Note that we have large choice sets, as there are $n-1$
possible nominees and $n$ nominators per village network. We model
agent $i$ as receiving utility $u_{i}(j)$ for nominating individual
$j$:
\[
u_{i}(j)=\alpha+\beta'x_{j}+\gamma'z_{j}+\mu_{v}+\epsilon_{ijv},
\]
where $x_{j}$ is a vector of network centralities for $j$ (eigenvector
centrality, diffusion centrality, and degree centrality), $z_{j}$
is a vector of demographic characteristics (e.g., leadership status,
geographic position, and caste controls), $\mu_{v}$ is a village fixed
effect, and $\epsilon_{ijv}$ is a Type-I extreme value distributed
disturbance. For convenience given the large choice sets, we estimate
the conditional logit model by an equivalent Poisson regression, where
the outcome is the expected number of times an alternative is selected
\citep{palmgren1981fisher,baker1994multinomial,lang1996comparison,guimaraes2003tractable}. This is presented in Table \ref{tab:predict1}.
A parallel OLS specification leads to the same conclusion, and is presented in Appendix \ref{sec:phase1_ext}.

We begin with a number of bivariate regressions in Table \ref{tab:predict_biv}.
First, we show that diffusion centrality is a significant driver of
an individual nominating another (column 1). A one standard deviation
increase in diffusion centrality is associated with a 0.607 log-point
increase in the number of others nominating a household (statistically
significant at the 1\% level). Columns 2 to 5 repeat the exercise
with two other network statistics (degree and eigenvector centrality),
with the ``leader'' dummy, and with an indicator for geographic centrality.
All of these variables, except for geographic centrality, predict
nomination, and the coefficients are similar in magnitude.

The different network centrality measures are all correlated. To investigate
whether diffusion centrality remains a predictor of gossip nomination
after controlling for the other measures, we start by introducing
them one by one as controls in column 1 to 4 in Table \ref{tab:predict_biv}.
Degree is insignificant, and does not affect
the coefficient of diffusion centrality. Eigenvector centrality is quite correlated
with diffusion centrality (as it should be, since they converge to each other with
enough time periods), and hard to distinguish from it. Introducing it cuts the effect of diffusion centrality by about
50\%, though it remains significant. The leader dummy is close to being
significant, but the coefficient of diffusion centrality remains
strong and significant. The geographic centrality variable now has
a negative coefficient, and does not affect coefficient of the diffusion
centrality variable.

These results provide suggestive evidence that a key driver of the nomination
decision involves diffusion centrality with $T>1$, although it may be more difficult
to separate eigenvector centrality and diffusion centrality from each other, which is not surprising since they are closely related
concepts.

To confirm this pattern,
in  the last column, we introduce all the variables together and perform a LASSO analysis, which
``picks'' out the variable that is most strongly associated with the outcome variable, the number of nominations. Specifically, we use the post-LASSO procedure of \cite{belloni2009least}. It is a two-step procedure. In the first step, standard LASSO is used to select the support: which variables matter in predicting our outcome variable (the number of nominations). In the second step, a standard Poisson regression is run on the support selected in the first stage.\footnote{The problem with the returned coefficients from LASSO in the first step is that it shrinks the coefficients towards zero. \cite{belloni2009least}, \cite{belloni2014inference} and \cite{belloni2014high} show that running the usual OLS (in our case, Poisson) on the variables selected in the first stage in a second step will recover consistent estimates for the parameters of interest.}${}^,$\footnote{To our knowledge, the post-LASSO procedure has not been developed for nonlinear models, so we only conduct the selection using OLS.}

As we did before, we consider the variables diffusion centrality, degree centrality, eigenvector centrality, leadership status, and geographic centrality in the standard LASSO to select the support. For the event nomination, LASSO picks out only one predictor: diffusion centrality. The post-LASSO
coefficient and standard error thus exactly replicate the Poisson regression of using just DC(0.2,3). This confirms that diffusion centrality is the key
predictor of gossip nomination. For the loan nomination, the LASSO picks out both degree and diffusion as relevant, though degree is insignificant.
We repeat the analysis with OLS instead of Poisson regression in Appendix \ref{sec:phase1_ext}, with identical qualitative results.

\section{Experiment: Do gossip nominees spread information widely?}\label{sec:experiment}

We have shown that individuals nominate diffusion central people.

Our third main investigation is to directly test our method.   Using our gossip nomination
protocol, do we get better diffusion
compared to other ways of choosing the seeds?
This is a key policy implication of our theory.

\subsection{Information Diffusion and Gossip Seeding}

\

We compare seeding of information to gossips (nominees) to two
benchmarks: (1) a set of village elders and (2) randomly selected
households. Seeding information among random households provides the
most relevant benchmark, because it allows us to study how information
circulates starting from random households. Seeding information
with village elders provides an interesting benchmark, because they
are traditionally respected as social and political leaders and one might presume that they are the right place to start. They have the advantage of being easy to identify, and it could be, for instance, that
information spreads widely only if it has the backing of someone who
can influence opinion, not just convey information.

We conducted an experiment in 213 villages in Karnataka that were
not involved in the microfinance diffiusion project and that were not villages in which we had previously worked. In every village, we attempted to contact $k$ households
and inform them about a promotion run by our partner, a cellphone
sales firm.  The promotion gave villagers a non-rivalrous chance
to win a new mobile phone or a cash prize.

The promotion worked as follows. Anyone who wanted to participate could give us
a ``missed call'' (a call that we registered, but did not answer, and which was thus free).
In public, a few weeks later, the registered phone numbers were randomly awarded
cash prizes ranging from 50 to 275 rupees, or a free cell phone.  Which prize any given entrant was
awarded was determined by the roll of two dice, ensuring that the
awarding of prizes was non-rivalrous and there was no strategic incentive to withhold information about the promotion.

In each treatment, the seeded individuals were encouraged to inform others in their community
about the promotion.  Our primary outcome data is thus the number of calls from unique households that
we received.\footnote{The calls from the seeds are included in the main specification, and so we include seed number fixed effects.}
In half of the villages, we set $k=3$, and in half of the villages
we set $k=5$. This was done because we were not sure of the right number of seeds that were needed to avoid either the process dying out or complete
and rapid diffusion. In practice, we find that there is no significant difference between 3 and 5 seeds on the outcome variable (number of calls received).

We randomly divided the sample of 213 into three arms of 71 villages,
where the $k$ seeds were selected as follows.
A few days before the experiment, we interviewed up to 15 households
in every village (selected randomly via circular random sampling via
the right-hand rule method) to identify ``elders'' and ``gossips,'' as described below.\footnote{Circular sampling is a standard survey methodology where the enumerator
starts at the end of a village, and, using a right-hand rule, spirals
throughout the entire village, when enumerating households. This allows us to cover the entire geographic span of the village which is desirable in this application, particularly as castes are often segregated, which may lead to geographic segregation of the network: we want to make sure the nominations reflect the entire village.}$^,$
\footnote{We asked the same questions in all villages so to be sure that the surveying had no impact on the treatment, and to allow
us to rack which sorts of seeds were reached in each treatment.  }
\begin{itemize}
\item[T1.] Elder: $k$ households were chosen from the list of village
elders obtained one week prior. The notion of ``village elder'' is well recognized in these villages: there are people who are recognized authorities, and believed
to be influential.
\item[T2.] Random: $k$ households were chosen uniformly at random, also using
the right-hand rule method and going to every $n/k$ households.
\item[T3.] Gossip: $k$ households were chosen from the list of gossip nominees
obtained one week prior.

\end{itemize}

Note that this seeding does not address the challenging problem of choosing the optimal set of nodes for diffusion given their centralities.  The solution is not simply to pick
the highest ranked nodes, since the positions of the seeds relative to each other also matters.  This results in a computationally challenging problem (in fact, an NP-complete one, see \citet*{kempekt2003,kempekt2005}).  Here, we  randomly selected
seeds from the set of nominees, which if anything biases the test against the gossip treatment. We could have instead used the most highly nominated nodes in combination with caste or other demographic information to pick combinations of highly central nodes that are likely to be well-spaced in the network.

The main outcome variable that we are interested in is the number of calls
received.
This represents the number of people who heard about
the promotion and wanted to participate. The mean number of calls in the sample is 9.35 (with standard deviation 15.64). The median number of villagers who participated is 3 across all villages.
In 80.28\% of villages, we received at least one call, and the 95th percentile is 39. This is not a very large number, which is not surprising, since many villagers already own cell phones and may not have been particularly interested in a chance of winning another one. Nonetheless, there is enough variation from village to village to allow us to identify the effect of information diffusion. Note that, given these small numbers, we exclude one village in our analysis in which the number of calls was 106. In this village one of  the seeds (who happened to be a gossip nominee in a random village) prepared posters to broadcast the information broadly. The diffusion in this villages does not have much to do with the network process we have in mind.
We thus use data from 212 villages in all the regressions that follow.  The results including this village are presented in Appendix \ref{sec:broadcast}. They
are qualitatively similar, but the OLS of the impact of hitting at least one gossip is larger and more precise, while the Reduced from and IV estimates are similar but noisier.

Figure \ref{fig:experiment} presents the results graphically. The distribution of calls in the gossip villages clearly stochastically
dominates that of the elder and random graphs. Moreover, the incidence
of a diffusive event, where a large number of calls is received,
is rare when we seed information randomly or with village elders --
but we do see such events when we seed information with gossip
nominees.

We begin with the analysis of our experiment as designed, which is the policymaker's experiment: what is the impact on diffusion of purposefully seeding gossips or elders,
as compared to random villagers?

\begin{align}\label{eq:RF}
y_{j} = \ &\theta_0 + \theta_1 GossipTreatment_j + \theta_2 ElderTreatment_j + \\
& \theta_3 NumberSeeds_j + \theta_4 NumberGossip_j + \theta_5 NumberElder_j + u_j, \nonumber
\end{align}
where $y_j$ is the number of calls received from village $j$ (or the number of calls per seed),
$GossipTreatment_j$ is a dummy equal to 1 if seeds were assigned to be from the gossip list,
$ElderTreatment_j$ is a dummy equal to 1 if seeds were assigned to be from the elder list,
$NumberSeeds_j$ is the total number of seeds, 3 or 5, in the village,
$NumberGossip_j$ is the total number of gossips nominated in the village, and
$NumberElder_j$ is the total number of elders nominated in the village.

Table \ref{tab:mainExpt_2_1} presents the regression analysis. The results including the broadcast village are presented in Appendix \ref{sec:broadcast}.\footnote{The OLS specification is larger, while the IV has a similar point estimate but is noisier.} Column 1 shows the
reduced form (\ref{eq:RF}). In control villages, we received 8.077 calls, or an average of 1.967 per seeds. In gossip treatment villages, we received 3.65 more calls ($p = 0.19$)
in total or 1.05 per seed ($p = 0.13$).

While this exercise is of independent interest, it is not the most direct test of our theory.
Our theory predicts that information seeded with a gossip will
flow faster than information seeded to someone who is not a gossip. Thus,
the most direct test of our model is to compare diffusion in villages
in which a gossip was hit to diffusion in villages where no gossip was hit.
The  seeding does not exclude gossips in the random and elder treatment villages.
In some random and elder treatment villages, gossip nominees  were included in our seeding
set by chance. On an average, 0.59 seeds were gossips in random villages.

Our next specification is thus to compare villages where ``at least 1 gossip was hit,'' or  ``at least 1 elder was
hit'' (both could be true simultaneously) to those where no elder or no gossip was hit (the control group).
Although the selection of households under treatments is random, the event that at least one gossip (elder) being hit is random only conditional on the number of potential gossip (elder)
seeds present in the village. We thus include as controls in the OLS regression of
number of calls on ``at least 1 gossip (elder) seed hit''. This specification should give us the causal effect of gossip (elder) seeding, but to assess the robustness,
we also make directly use of the variation induced by the village level experiment, and we instrument ``at least 1 gossip (elder) seed hit'' is instrumented by the gossip (elder) treatment status of the village.

Therefore, we are interested in

\begin{align}\label{eq:OLSIV}
y_{j} = \ & \beta_0 + \beta_1 GossipReached_j + \beta_2 ElderReached_j + \\
&\beta_3 NumberSeeds_j + \beta_4 NumberGossip_j + \beta_5 NumberElder_j + \epsilon_j. \nonumber
\end{align}

\noindent

This equation is estimated either by OLS, or by instrumental variables, instrumenting $GossipReached_j$ with $GossipTreatment_j$ and $ElderReached_j$ with $ElderTreatment_j$. There the first stage equations are

\begin{align}\label{eq:FS-gossip}
GossipReached_j = \ &\pi_0+ \pi_1 GossipTreatment_j + \pi_2 ElderTreatment_j + \\ \nonumber
& \pi_3 NumberSeeds_j + \pi_4 NumberGossip_j + \pi_5 NumberElder_j + v_j, \\ \nonumber
\end{align}
and

\begin{align}\label{eq:FS-elder}
ElderReached_j = \ &\rho_0+ \rho_1 GossipTreatment_j + \rho_2 ElderTreatment_j + \\ \nonumber
& \rho_3 NumberSeeds_j + \rho_4 NumberGossip_j + \rho_5 NumberElder_j + \nu_j. \\ \nonumber
\end{align}

Column 2 of Table \ref{tab:mainExpt_2_1}  shows the OLS. The effect of hitting at least one gossip seed is 3.79 for the total number of calls (p-value = 0.04) ,which represents a 65\% increase, relative to villages where no gossip seed was hit,
or  0.95 (p-value = 0.06)
calls per seed. Column 5 presents the IV estimates (Columns 3 and 4 present the first stage results for the IV). They are larger than the OLS estimates, and statistically indistinguishable, albeit less precise.

Given the distribution of calls, the results are potentially sensitive to outliers. We therefore present quantile regressions of the comparison between gossip/no gossip and Gossip treatment/Random villages in Figure \ref{fig:Quantilev2}. The specification that compares villages with or without gossip hit (Panel B) is much more precise. The treatment effects are significantly greater than zero starting at the 35th percentile.
Specifically, hitting a gossip significantly increases the median number of calls by 122\% and calls at the 80th percentile by 71.27\%.

This is our key experimental result: gossip nominees are much better
seeds for diffusing a piece of information, at least in these experiments.

\subsection{Mechanism: does gossip seed diffusion capture diffusion centrality?}

\

We have seen, in the first part of our empirical investigation, that villagers
nominate individuals who are diffusion central.  To what extent is the
greater diffusion of information in the experiment mediated by the
diffusion centrality of the gossip seeds, and to what extent does
it reflect the villagers' ability to capture other dimensions of individuals
that would make them good at diffusing information?

To get at this issue, a few weeks after the experiment, we collected
network data in 69 villages in which seeds were randomly
selected (2 of the 71 villages were not accessible at the time). In these
villages, by chance, some seeds happened to be gossips and/or elders.
We create a measure of centrality that parallels the gossip
dummy and elder dummy by forming a dummy for ``high diffusion
centrality.'' We defined a household has ``high diffusion centrality''
if its diffusion centrality is at least one standard deviation above
the mean. With these measures, in our 69 villages, 13\% of households are defined to have
``high diffusion centrality'', while 1.7\% were nominated as seeds, and 9.6\% are ``leaders.''
Twenty-four villages have exactly one high diffusion centrality seed and
14 have more than one. Twenty-three villages have exactly one
gossip seed, and 8 have more than one.\footnote{We continue to exclude the one village
in which a gossip seed broadcasted information. The results including that village are in Appendix \ref{tab:seedType_OLS_flyer}: they
reinforce the conclusion that diffusion centrality does not capture everything about why gossips are good seeds, since this particular
gossip seed had low diffusion centrality. With this village in, the coefficient of hitting at least one gossip does not decline when we control for diffusion centrality,
and in fact diffusion centrality, even on its won, is not significantly associated with more diffusion.}

Column 1 of Table \ref{tab:seedType_OLS} runs the same specification as in Table \ref{tab:mainExpt_2_1} but in the
 68 random villages. In these villages, hitting a gossip by chance increases the number of calls by 6.65 (compared to 3.78 in the whole sample).
In column 3, we regress the number of calls on a dummy for hitting = a high $DC$ seed: high $DC$ seeds
do increase the number of calls (by 5.18 calls).
Finally, we regress  number of calls received only on dummy of hitting a high $DC$ seed, and we see that the number of calls increase by 5.18.
In column 2, we augment the specification in column 1 to add the dummy for ``at least one DC central seed''. Since DC and Gossip are correlated,
the regression is not particularly precise. The point estimate of gossip, however, only declines slightly.

Taking the point estimates seriously, we see that the results suggest that diffusion centrality captures
part of the impact of a gossip nomination, but likely not all of it.
Gossip seeds
tend to be highly central, and information does spread more from
highly central seeds. This accounts for some part of the reason why information
diffuses more extensively from gossip nominated seeds.  At the same time, it
is also appears that the model does not capture the entire reason why
gossip seeds are best for diffusing information: even controlling for
their diffusion centrality, gossip seeds still lead to greater diffusion.
It is likely that our measures of the network are imperfect, and so part of the extra diffusion from the gossip nominations
could reflect that villagers have better estimates of diffusion centrality from their network gossip than we do from our surveys.
It also could be
that the gossip nomination is a richer proxy
for information diffusion than the model-based centrality measure.
For instance, there are clearly other factors that
predict whether a seed will be good at diffusing information beyond
their centrality (altruism, interest in the information, etc.) and
villagers may be good at capturing those factors. However, the
standard errors do not allow us to pinpoint how much of the extra diffusion
coming from being nominated
as a gossip is explained by network centrality.

\section{Conclusion}\label{sec:conclusion}

Our model illustrates that it should be easy for even very myopic
and non-Bayesian (as well as fully rational) agents, simply by counting, to have an idea of who
is central in their community  -- according to fairly complex measures of centrality.
Motivated by this, we asked villagers to identify good diffusers
in their village. They do not simply name locally central individuals
(the most central among those they know), but actually name people who
are {globally} central within the village. Moreover, in a specially
designed experiment, we find that nominated individuals are indeed
much more effective at diffusing a simple piece of information than other individuals, even village elders.
This suggests that people can
use simple observations to learn valuable things about the complex
social systems within which they are embedded, and that researchers and others who are interested
in diffusing information have an easy and direct method of identifying highly central seeds.

Although our model focuses on the network-based mechanics of communication,
in practice, considerations beyond simple network position may determine
who the ``best'' person is to spread information, as other characteristics
may affect the quality and impact of communication. It seems that
villagers take such characteristics into account and thus nominate
individuals who are not only highly central but who are even more successful at
diffusing information
than the average highly central individual.

Our findings have important policy implications, since such nominations
are easy to collect and therefore can be used in a variety of
contexts, either on their own or combined with other easily collected
data, to identify effective seeds for information diffusion. Thus,
using this sort of protocol may be a cost-effective way to improve
diffusion and outreach.

Beyond these applications, the work presented here opens a rich agenda for further
research, as one can explore which other aspects of agents' social
environments can be learned in simple ways. For example, can individuals
also identify individuals who are trusted by others? A
piece of information about a cell-phone giveaway is probably innocuous enough to
be transmitted by a ``gossip,'' but what about advice on immunization,
for example?

\bibliographystyle{ecta}
\bibliography{networks}

\newpage

\appendix

\section*{Figures}

\begin{figure}[!h]
\begin{center}
\subfloat[Event question]{
\includegraphics[trim = 2.5mm 4.7mm 2.5mm 2.5mm, clip = true, scale = 0.65]{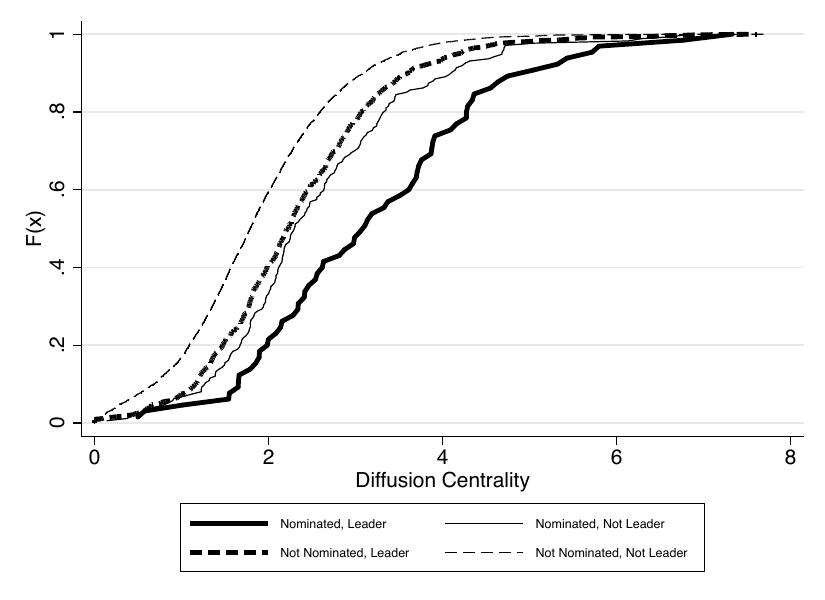} 
}

\

\scalebox{0.6}{\begin{tabular}{l*{1}{c}}
                                                                                                                                                                                                        &population share\\
nominated, leader (event)                                                                                                                                                                               &        0.01\\
not nominated, leader (event)                                                                                                                                                                           &        0.11\\
nominated, not leader (event)                                                                                                                                                                           &        0.03\\
not nominated, not leader (event)                                                                                                                                                                       &        0.86\\
\end{tabular}
}

\

\subfloat[Loan question]{
\includegraphics[trim = 2.5mm 4.7mm 2.5mm 2.5mm, clip = true, scale = 0.65]{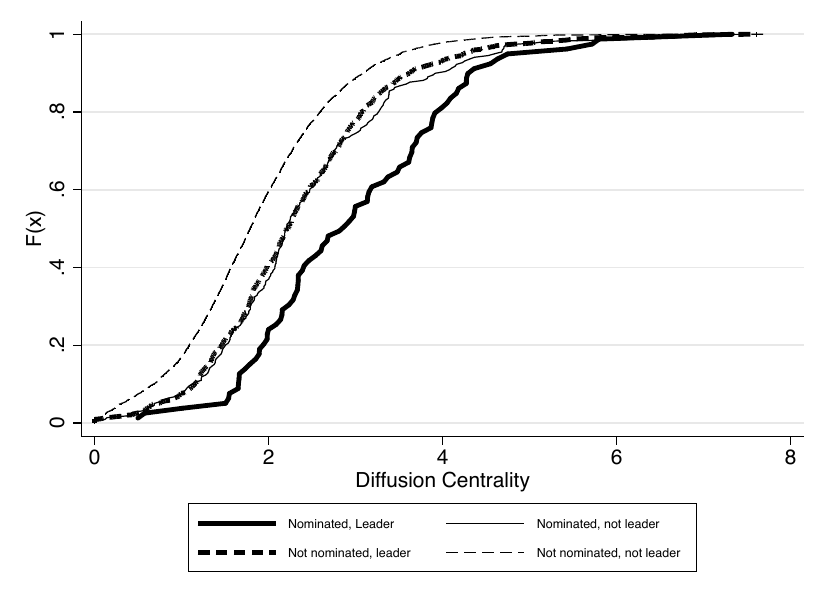}
}
\

\scalebox{0.6}{\begin{tabular}{l*{1}{c}}
                                                                                                                                                                                                        &population share\\
nominated, leader (loan)                                                                                                                                                                                &        0.01\\
not nominated, leader (loan)                                                                                                                                                                            &        0.10\\
nominated, not leader (loan)                                                                                                                                                                            &        0.03\\
not nominated, not leader (loan)                                                                                                                                                                        &        0.85\\
\end{tabular}
}

\end{center}
\caption{This figure uses the Phase 1 dataset. It presents CDFs of the (normalized) diffusion centrality, diffusion centrality divided by the standard deviation, conditional on  classification (whether or not it is nominated under the event question in Panel A and the loan question in Panel B and whether or not it has a village leader).} \label{fig:cdfCentrality}
\end{figure}

\begin{figure}[!h]
\begin{center}
	\subfloat[Share of nominees in specified neighborhood]{
	\includegraphics[trim = 2.5mm 4.7mm 2.5mm 2.5mm, clip = true, scale = 0.65]{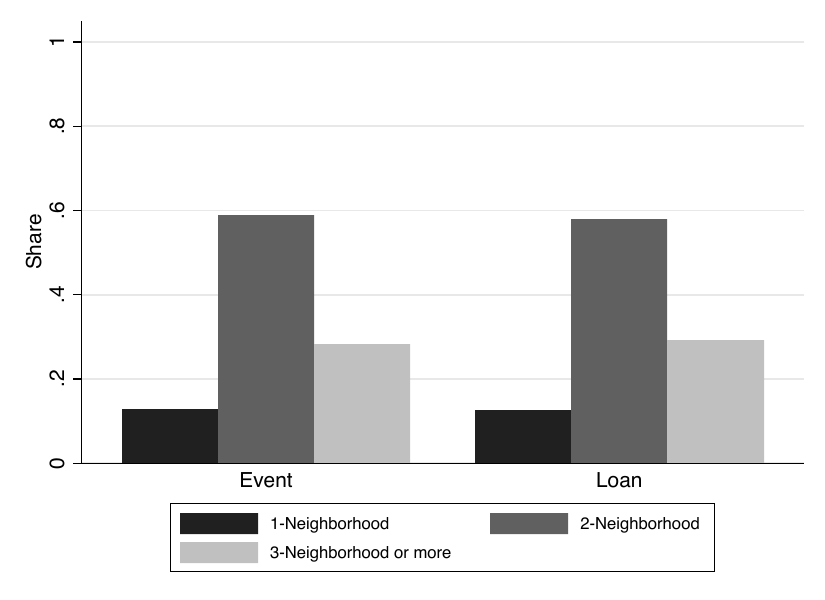} 
	}

	\
	\scalebox{0.6}{\begin{tabular}{@{\hskip\tabcolsep\extracolsep\fill}l*{1}{c}}
                                                                                                    &       share\\
Nodes in 1-Neighborhood                                                                             &        0.09\\
Nodes in 2-Neighborhood                                                                             &        0.50\\
Nodes in 3-Neighborhood or more                                                                     &        0.41\\
\end{tabular}
}
	\

	\subfloat[Average diffusion centrality percentile of nominees in specified neighborhood]{
	\includegraphics[trim = 2.5mm 4.7mm 2.5mm 2.5mm, clip = true, scale = 0.65]{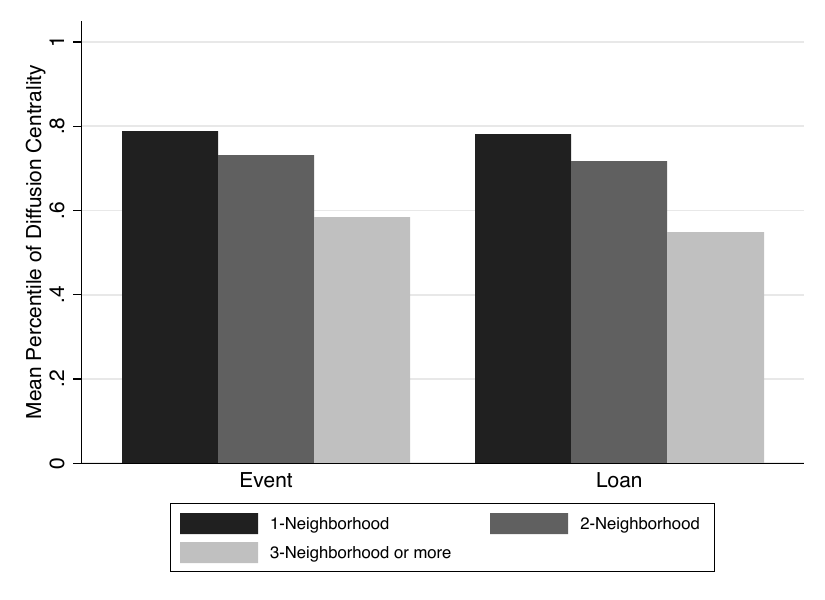}
	}
\end{center}
		\caption{Distribution of nominees and their diffusion centrality by network distance in the Phase 1 dataset.} \label{fig:neighborhood}
\end{figure}

\clearpage

\begin{figure}[!h]
\begin{center}
	\includegraphics[trim = 2.5mm 4.7mm 2.5mm 2.5mm, clip = true, scale = 1]{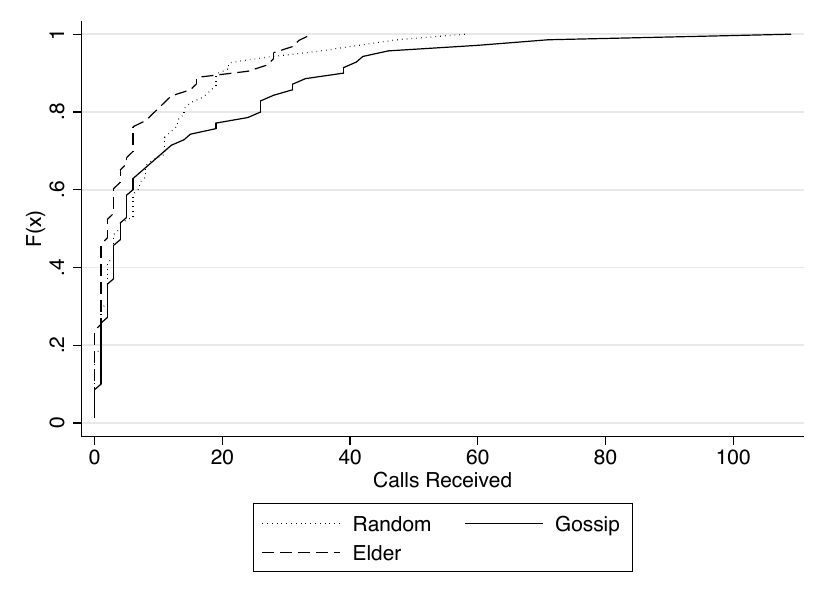}
\end{center}
		\caption{Distribution of calls received by treatment in the Phase 2 experiment.\label{fig:experiment}}
\end{figure}

\clearpage

\begin{figure}[!h]
\begin{center}
	\subfloat [\small{Quantile treatment effect  by treatment - Reduced Form}]{
	\includegraphics[scale = 0.9]{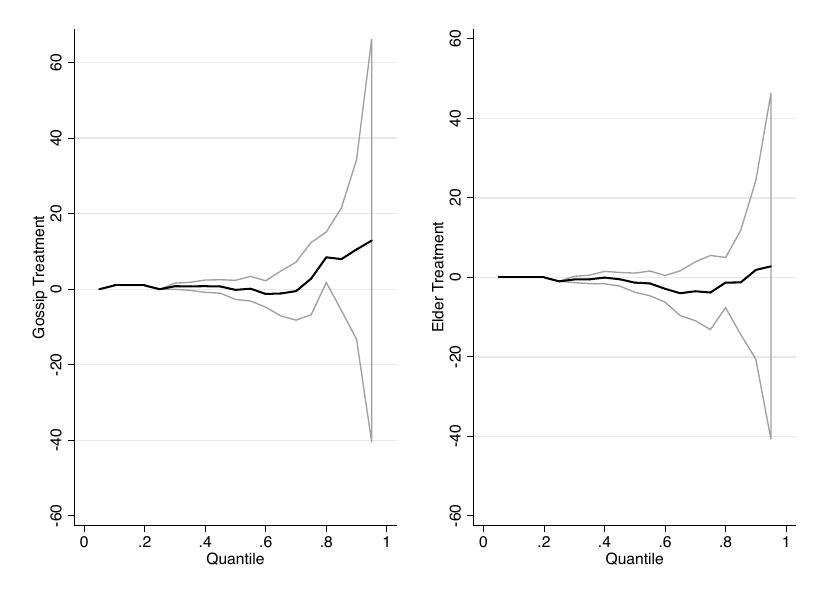}
	}
	
	\
	
	\subfloat[\small{Quantile treatment effect by hitting at least one gossip or elder}]{
	\includegraphics[scale = 0.9]{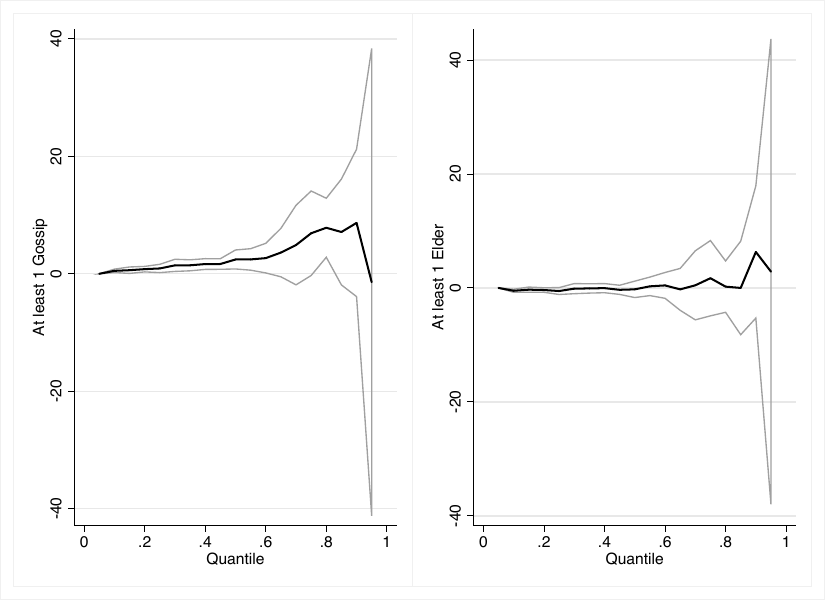}
	}
	\end{center}
		\caption{Quantile treatment effects where for $j \in \{Gossip, \ Elder\}$,  $\widehat{\beta}_j(u)$  is computed for $u = \{0.05,...,0.95\}$. The intercept $\alpha(u)$ (not pictured) in each case is the omitted category corresponding to the random treatment.\label{fig:Quantilev2}}
\end{figure}

\clearpage

\section*{Tables}

\begin{table}[!h]
\centering
\caption{Summary Statistics\label{tab:summary}}
\begin{threeparttable}
{
\def\sym#1{\ifmmode^{#1}\else\(^{#1}\)\fi}
\begin{tabular}{l*{1}{cc}}
\hline\hline
                                                                                                                                                                                                        &        mean&          sd\\
\hline
households per village                                                                                                                                                                                  &      196&       61.70\\
household degree                                                                                                                                                                                        &       17.72&        9.81\\
clustering in a household's neighborhood                                                                                                                                                                &        0.29&        0.16\\
avg distnace between nodes in a village                                                                                                                                                                 &        2.37&        0.33\\
fraction in the giant component                                                                                                                                                                         &        0.98&        0.01\\
is a leader                                                                                                                                                                                             &        0.12&        0.32\\
nominated someone for event                                                                                                                                                                             &        0.38&        0.16\\
nominated someone for loan                                                                                                                                                                              &        0.48&        0.16\\
was nominated for event                                                                                                                                                                                 &        0.04&        0.2\\
was nominated for loan                                                                                                                                                                                  &        0.05&        0.3\\
number of nominations received for event                                                                                                                                                                &        0.34&        3.28\\
number of nominations received for loan                                                                                                                                                                 &        0.45&        3.91\\
\hline\hline
\end{tabular}
}

\begin{tablenotes}
Notes: This table presents summary statistics from the Phase 1 dataset: 33 villages of the Banerjee et al. (2013) networks dataset where nomination data was originally collected in 2011/2012. For the variables ``nominated someone for loan (event)'' and ``was nominated for loan (event)'' we present the cross-village standard deviation.
\end{tablenotes}
\end{threeparttable}
\end{table}
\ 

\begin{table}[!h]
\centering
\caption{Leader Gossip Overlap} \label{tab:overlap}
\begin{threeparttable}
{
\def\sym#1{\ifmmode^{#1}\else\(^{#1}\)\fi}
\begin{tabular}{@{\hskip\tabcolsep\extracolsep\fill}l*{1}{c}}
\hline\hline
                                                                                                    &       share\\
\hline
leaders who are nominated (loan)                                                                    &        0.11\\
nominated who are leaders (loan)                                                                    &        0.27\\
leaders who are not nominated (loan)                                                                &        0.89\\
nominated who are not leaders (loan)                                                                &        0.73\\
leaders who are nominated (event)                                                                   &        0.09\\
nominated who are leaders (event)                                                                   &        0.27\\
leaders who are not nominated (event)                                                               &        0.91\\
nominated who are not leaders (event)                                                               &        0.73\\
\hline\hline
\end{tabular}
}
 
\begin{tablenotes}
Notes: This table presents the overlap between ``leaders'' in the sample and those nominated as gossips (for loan and event). 
\end{tablenotes}
\end{threeparttable}
\end{table}

\clearpage

\begin{table}[!h]
\centering
\caption{Factors predicting nominations}\label{tab:predict1}
\begin{threeparttable}
\begin{tabular}{lccccc} \hline
 & (1) & (2) & (3) & (4) & (5) \\
 & Event & Event & Event & Event & Event \\ \hline
 &  &  &  &  &  \\
Diffusion Centrality & 0.607 &  &  &  &  \\
 & (0.085) &  &  &  &  \\
Degree Centrality &  & 0.460 &  &  &  \\
 &  & (0.078) &  &  &  \\
Eigenvector Centrality &  &  & 0.605 &  &  \\
 &  &  & (0.094) &  &  \\
Leader &  &  &  & 0.915 &  \\
 &  &  &  & (0.279) &  \\
Geographic Centrality &  &  &  &  & -0.082 \\
 &  &  &  &  & (0.136) \\
 &  &  &  &  &  \\
 Observations & 6,466 & 6,466 & 6,466 & 6,466 & 6,466 \\ \hline
\end{tabular}

\begin{tabular}{lccccc} \hline
 & (1) & (2) & (3) & (4) & (5) \\
 & Loan & Loan & Loan & Loan & Loan \\ \hline
 &  &  &  &  &  \\
Diffusion Centrality & 0.625 &  &  &  &  \\
 & (0.075) &  &  &  &  \\
Degree Centrality &  & 0.490 &  &  &  \\
 &  & (0.067) &  &  &  \\
Eigenvector Centrality &  &  & 0.614 &  &  \\
 &  &  & (0.084) &  &  \\
Leader &  &  &  & 1.013 &  \\
 &  &  &  & (0.263) &  \\
Geographic Centrality &  &  &  &  & -0.113 \\
 &  &  &  &  & (0.082) \\
 &  &  &  &  &  \\
 Observations & 6,466 & 6,466 & 6,466 & 6,466 & 6,466 \\ \hline
\end{tabular}

\begin{tablenotes}
Notes: This table uses data from the Phase 1 dataset. It reports estimates of Poisson regressions where the outcome variable is the expected number of nominations. Panel A presents results for the event question, and Panel B presents results for the loan question. 
Degree centrality, eigenvector centrality, and diffusion centrality, $DC\left(\g; 1/\E[\lambda_1],\E[Diam(\g(n,p))]\right)$, are normalized by their standard deviations. Standard errors (clustered at the village level) are reported in parentheses.
\end{tablenotes}
\end{threeparttable}
\end{table}

\clearpage

\begin{table}[!h]
\centering
\begin{threeparttable}
\caption{Factors predicting nominations}\label{tab:predict_biv}
\begin{tabular}{lcccccc} \hline
 & (1) & (2) & (3) & (4) & (5) & (6) \\
 & Event & Event & Event & Event & Event & Event \\ \hline
 &  &  &  &  &  &  \\
Diffusion Centrality & 0.642 & 0.354 & 0.567 & 0.606 & 0.374 & 0.607 \\
 & (0.127) & (0.176) & (0.091) & (0.085) & (0.206) & (0.085) \\
Degree Centrality & -0.039 &  &  &  & -0.020 &  \\
 & (0.101) &  &  &  & (0.101) &  \\
Eigenvector Centrality &  & 0.283 &  &  & 0.281 &  \\
 &  & (0.186) &  &  & (0.186) &  \\
Leader &  &  & 0.535 &  &  &  \\
 &  &  & (0.301) &  &  &  \\
Geographic Centrality &  &  &  & -0.082 &  &  \\
 &  &  &  & (0.142) &  &  \\
 &  &  &  &  &  &  \\
Observations & 6,466 & 6,466 & 6,466 & 6,466 & 6,466 & 6,466 \\
 Post-LASSO &  &  &  &  &  & \checkmark \\ \hline
\end{tabular}

\begin{tabular}{lcccccc} \hline
 & (1) & (2) & (3) & (4) & (5) & (6) \\
 & Loan & Loan & Loan & Loan & Loan & Loan \\ \hline
 &  &  &  &  &  &  \\
Diffusion Centrality & 0.560 & 0.431 & 0.578 & 0.624 & 0.339 & 0.560 \\
 & (0.122) & (0.130) & (0.081) & (0.075) & (0.170) & (0.122) \\
Degree Centrality & 0.070 &  &  &  & 0.088 & 0.070 \\
 & (0.086) &  &  &  & (0.084) & (0.086) \\
Eigenvector Centrality &  & 0.219 &  &  & 0.231 &  \\
 &  & (0.138) &  &  & (0.138) &  \\
Leader &  &  & 0.623 &  &  &  \\
 &  &  & (0.288) &  &  &  \\
Geographic Centrality &  &  &  & -0.115 &  &  \\
 &  &  &  & (0.089) &  &  \\
 &  &  &  &  &  &  \\
Observations & 6,466 & 6,466 & 6,466 & 6,466 & 6,466 & 6,466 \\
 Post-LASSO &  &  &  &  &  & \checkmark \\ \hline
\end{tabular}

\begin{tablenotes}
Notes: This table uses data from the Phase 1 dataset. It reports estimates of Poisson regressions where the outcome variable is the expected number of nominations under the event question. Panel A presents results for the event question, and Panel B presents results for the loan question. 
Degree centrality, eigenvector centrality, and diffusion centrality, $DC\left(\g; 1/\E[\lambda_1],\E[Diam(\g(n,p))]\right)$, are normalized by their standard deviations. Column (6) uses a post-LASSO procedure where in the first stage LASSO is implemented to select regressors and in the second stage the regression in question is run on those regressors. Omitted terms indicate they were not selected in the first stage. Standard errors (clustered at the village level) are reported in parentheses. 
\end{tablenotes} 
\end{threeparttable}
\end{table}

\clearpage

\begin{table}[!h]
\centering
\caption{Calls received by treatment} \label{tab:mainExpt_2_1}
\scalebox{0.8}{\begin{threeparttable}
\begin{tabular}{lccccc} \hline
 & (1) & (2) & (3) & (4) & (5) \\
 & RF & OLS & IV 1: First Stage & IV 2: First Stage & IV: Second Stage \\
 & Calls Received & Calls Received & At least 1 Gossip & At least 1 Elder & Calls Received \\ \hline
 &  &  &  &  &  \\
Gossip Treatment & 3.651 &  & 0.644 & 0.328 &  \\
 & (2.786) &  & (0.0660) & (0.0824) &  \\
Elder Treatment & -1.219 &  & 0.230 & 0.842 &  \\
 & (2.053) &  & (0.0807) & (0.0509) &  \\
At least 1 Gossip &  & 3.786 &  &  & 7.436 \\
 &  & (1.858) &  &  & (4.266) \\
At least 1 Elder &  & 0.792 &  &  & -3.475 \\
 &  & (2.056) &  &  & (2.259) \\
 &  &  &  &  &  \\
Observations & 212 & 212 & 212 & 212 & 212 \\
 Control Group Mean & 8.077 & 5.846 & 0.391 & 0.184 & 5.805 \\ \hline
\end{tabular}

\begin{tabular}{lccccc} \hline
 & (1) & (2) & (3) & (4) & (5) \\
 & RF & OLS & IV 1: First Stage & IV 2: First Stage & IV: Second Stage \\
 & $\frac{\mbox{Calls Received}}{\mbox{Seeds}}$ & $\frac{\mbox{Calls Received}}{\mbox{Seeds}}$ & At least 1 Gossip & At least 1 Elder & $\frac{\mbox{Calls Received}}{\mbox{Seeds}}$ \\ \hline
 &  &  &  &  &  \\
Gossip Treatment & 1.053 &  & 0.644 & 0.328 &  \\
 & (0.698) &  & (0.0660) & (0.0824) &  \\
Elder Treatment & -0.116 &  & 0.230 & 0.842 &  \\
 & (0.518) &  & (0.0807) & (0.0509) &  \\
At least 1 Gossip &  & 0.952 &  &  & 1.979 \\
 &  & (0.501) &  &  & (1.071) \\
At least 1 Elder &  & 0.309 &  &  & -0.677 \\
 &  & (0.511) &  &  & (0.588) \\
 &  &  &  &  &  \\
Observations & 212 & 212 & 212 & 212 & 212 \\
 Control Group Mean & 1.967 & 1.451 & 0.391 & 0.184 & 1.317 \\ \hline
\end{tabular}

\begin{tablenotes}Notes: This table uses data from the Phase 2 experimental dataset. Panel A uses the number of calls received as the outcome variable. Panel B normalizes the number of calls received by the number of seeds, 3 or 5, which is randomly assigned. For both panels, Column (1) shows the reduced form results of regressing number of calls received on dummies for gossip treatment and elder treatment. Column (2) regresses number of calls received on the  dummies for if at least 1 gossip was hit and for if at least 1 elder was hit in the village. Columns (3) and (4) show the first stages of the instrumental variable regressions, where the dummies for ``at least 1 gossip'' and ``at least 1 elder'' are regressed on the exogenous variables: gossip treatment dummy and elder treatment dummy. Column (5) shows the second stage of the IV; it regresses the number of calls received on the dummies for if at least 1 gossip was hit and if at least 1 elder was hit, both instrumented by treatment status of the village (gossip treatment or not, elder treatment or not). All columns control for number of gossips, number of elders, and number of seeds. For columns (1), (3), and (4)  the control group mean is calculated as the mean expectation of the outcome variable when the treatment is ``random''. For columns (2) and (5) the control group mean is calculated as the mean expectation of the outcome variable when no gossips or elders are reached. The control group mean for the second stage IV is calculated using IV estimates. Robust standard errors are reported in parentheses.
\end{tablenotes}
\end{threeparttable}} 
\end{table}

\clearpage
 
\setcounter{table}{5}
\renewcommand{\thetable}{\arabic{table}}

\begin{table}[!h]
\centering
\caption{Calls received by seed type}\label{tab:seedType_OLS}
\scalebox{0.65}{\begin{threeparttable}
\begin{tabular}{lcccccc} \hline
 & (1) & (2) & (3) & (4) & (5) & (6) \\
 & Calls Received & Calls Received & Calls Received & $\frac{\mbox{Calls Received}}{\mbox{Seeds}}$ & $\frac{\mbox{Calls Received}}{\mbox{Seeds}}$ & $\frac{\mbox{Calls Received}}{\mbox{Seeds}}$ \\ \hline
 &  &  &  &  &  &  \\
At least 1 Gossip & 6.645 & 5.574 &  & 1.637 & 1.370 &  \\
 & (3.867) & (4.119) &  & (0.949) & (0.992) &  \\
At least 1 Elder & 0.346 & 0.0566 &  & 0.245 & 0.173 &  \\
 & (3.602) & (3.576) &  & (0.926) & (0.912) &  \\
At least 1 High \textit{DC} Seed &  & 3.663 & 5.183 &  & 0.916 & 1.312 \\
 &  & (2.494) & (2.383) &  & (0.623) & (0.649) \\
 &  &  &  &  &  &  \\
Observations & 68 & 68 & 68 & 68 & 68 & 68 \\
 Control Group Mean & 5.586 & 5.586 & 5.719 & 1.353 & 1.353 & 1.402 \\ \hline
\end{tabular}

\begin{tablenotes}Notes: This table uses data from the Phase 2 experimental and network dataset. It presents OLS regressions of number of calls received (and number of calls received normalized by the number of seeds, 3 or 5, which is randomly assigned) on characteristics of the set of seeds. High $DC$ refers to a seed being above the mean by one standard deviation of the centrality distribution. All columns control for total number of gossips, number of elders, and number of seeds. For columns (1), (2), (4), and (5), the control group mean is calculated as the mean expectation of the outcome variable when no gossips or elders are reached. For columns (3) and (6), the control group mean is calculated as the mean expectation of the outcome variable when no high $DC$ seeds are reached. Robust standard errors are reported in parentheses. 
\end{tablenotes}
\end{threeparttable}
}
\end{table}

\newpage

\section{Threshold Parameters $(q,T)$ for Diffusion Centrality}\label{sec:Echoes}

We present two new theoretical results about diffusion centrality:
Theorem \ref{thm:bench}
and Corollary \ref{cor:bench}. We explicitly demonstrate that there
are natural intermediate parameters associated with diffusion centrality
at which it is distinct from the two boundary cases in which it simplifies
to other well-known centrality measures.

We can think of overall the number of times everyone is informed about
information coming from a seed as being composed of direct paths (the
seed, $i$, tells $j$), indirect natural paths ($i$ tells $j$ who
tells $k$ who tells $l$ and each are distinct), and echoes or other
cycles ($i$
tells $j$ who tells $k$ who tells $j$ who tells $k$). If there
is only one round of communication, then information never travels
beyond the seed's neighborhood. In that case diffusion centrality
just counts direct paths and is coincides with degree centrality.
On the other hand, if there are infinite rounds of communication (and
the probability of communicating across a link is high enough), diffusion
centrality converges to eigenvector centrality, and by capturing arbitrary
walks is partly driven by echoes and cycles as well as potentially long indirect paths.
Our proposed intermediate benchmark captures direct paths and indirect
natural paths and involves fewer cycles (which then become endemic as $T$ goes to
infinity). Theorem \ref{thm:bench}
and Corollary \ref{cor:bench} are theoretical results that provide
network-based guidance on what intermediate parameters achieve this
goal of mostly stripping out echoes.

Here we report a theorem and corollary that formalize some of these
intuitive statements. To do this, we consider a sequence of Erdos\textendash Renyi
networks, as those provide for clear limiting properties.

These properties extend to more general classes of random graph models
by standard arguments (e.g., see \citet{jackson2008b}), but an exploration
of such models takes us beyond our scope here.

Let ${\bf g}(n,p)$ denote an Erdos\textendash Renyi random network
drawn on $n$ nodes, with each link having independent probability
$p$. In the following, as is standard, $p$ (and $T$) are functions
of $n$, but we omit that notation to keep the expressions uncluttered.
We also allow for self-links for ease of calculations. We consider
a sequence of random graphs of size $n$ and as is standard in the
literature, consider what happens as $n\rightarrow\infty$.

\begin{thm} \label{thm:bench} If $T$ is not too large ($T=o({pn})$),\footnote{To remind the reader, $f(n)=o(h(n))$ for functions $f,h$ if $f(n)/h(n)\rightarrow0$,
and $f(n)=\Omega(h(n))$ if there exists $k>0$ for which $f(n)\geq kh(n)$
for all large enough $n$.} then the expected diffusion centrality of any node converges to $npq\frac{1-(npq)^{T}}{1-npq}$.
That is, for any $i$,
\[
\frac{\E\left[{DC}\left({\bf g}(n,p);q,T\right)_{i}\right]}{npq\frac{1-(npq)^{T}}{1-npq}}\rightarrow1.
\]
\end{thm}

Theorem \ref{thm:bench} provides a precise expression for how diffusion
centrality behaves in large graphs. Provided that $T$ grows at a
rate that is not overly fast\footnote{Note that $T$ can still grow at a rate that can tend to infinity
and in particular can grow faster than the growth rate of the diameter
of the network \textendash{} $T$ can grow up to $pn$, which will
generally be larger than $\log(n)$, while diameter is proportional
to $\log(n)/\log(pn)$.}, then we {\sl expect} the diffusion centrality of a typical node to converge
to $npq\frac{1-(npq)^{T}}{1-npq}$.  Of course, individual nodes vary in the centralities based on the realized network, but this result provides us with
the extent of diffusion that is expected from nodes, on average.

Theorem \ref{thm:bench} thus provides us with a tool to see when
a diffusion that begins at a typical node is expected to reach most other
nodes or not, on average, and leads to the following corollary.

\begin{cor} \label{cor:bench} Consider a sequence of Erdos-Renyi
random networks ${\bf g}(n,p)$ for which $\frac{1-\varepsilon}{\sqrt{n}}\geq p\geq(1+\varepsilon)\frac{\log(n)}{n}$
for some $\varepsilon>0$\footnote{This ensures that the network is connected almost surely as $n$ grows,
but not so dense that the diameter shrinks to be trivial. See \citet{bollobas2001}.} and any corresponding $T=o({pn})$. Then for any node $i$:
\begin{enumerate}
\item $1/\E[\lambda_{1}]$ is a threshold for $q$ as to whether diffusion
reaches a vanishing or expanding number of nodes :
\begin{enumerate}
\item If $q=o(1/\E[\lambda_{1}])$, then $\E\left[{DC}\left({\bf g}(n,p);q,T\right)_{i}\right]\rightarrow0$.
\item If $1/\E[\lambda_{1}]=o(q)$, then $\E\left[{DC}\left({\bf g}(n,p);q,T\right)_{i}\right]\rightarrow\infty$.\footnote{Note that $\E[\lambda_{1}]=np$.}
\end{enumerate}
\item $\E[Diam({\bf g}(n,p))]$ is a threshold relative for $T$ as to whether
diffusion reaches a vanishing or full fraction of nodes:\footnote{Again, note that $T=o(pn)$ is satisfied whenever $T=o(\log(n))$,
and thus is easily satisfied given that diameter is proportional to
$\log(n)/\log(pn)$ .}
\begin{enumerate}
\item If $T<(1-\varepsilon)\E[Diam({\bf g}(n,p))]$ for some $\varepsilon>0$,
then $\frac{\E\left[{DC}\left({\bf g}(n,p);q,T\right)_{i}\right]}{n}\rightarrow0$.
\item If $T\geq\E[Diam({\bf g}(n,p))]$ and $q>1/(\E[\lambda_{1}])^{1-\varepsilon}$
for some $\varepsilon>0$, then $\frac{\E\left[{DC}\left({\bf g}(n,p);q,T\right)_{i}\right]}{n}=\Omega(1)$.
\end{enumerate}
\end{enumerate}
\end{cor}

Putting these results together, we know that $q=1/\E[\lambda_{1}]$
and $T=\E[Diam(\g)]$ are the critical values where the process transitions
from a regime where diffusion is expected (in a large network) to
reach almost nobody to one where it will saturate the network. At
the critical value itself, diffusion reaches a non-trivial fraction
of the network but not everybody in it.

This makes $DC\left(\g;1/\E[\lambda_{1}],\E[Diam(\g(n,p))]\right)$
an interesting measure of centrality, distinct from other standard
measures of centrality at these values of the parameters. This fixes
$q$ and $T$ as a function of the graph so that the centrality measure
no longer has any free parameters \textendash{} enabling one to compare
it to other centrality measures without worrying that it performs
better simply because it has parameters that can be adjusted by the
researcher. For reasons explained earlier, we use it throughout
the empirical sections. 


\newpage
\begin{center}
{\bf \large{For Online Publication}}
\end{center}

\section{Proofs}\label{sec:proofs}

\subsection{Relation of Diffusion Centrality to Other Measures}

\

We prove all of the statements for the case of weighted ($g_{ij} \in [0,1]$) and directed networks ($g_{ij}$ can differ from $g_{ji}$).  Thus, we can take ${\bf g}\in [0,1]^{n\times n}$ and to allow
for full heterogeneity in communication.  For instance, $g_{ij}$ and $g_{ik}$ could both be positive, and yet differ from each other.
In what follows, we still include $q$ as an explicit multiplier, noting that this is redundant given the generality of the $\g$ matrix. The reader who finds this distracting can set $q=1$.

Let $v^{\left(L,k\right)}$ indicate $k$-th left-hand side eigenvector of ${\bf g}$ and similarly let $v^{\left(R,k\right)}$
indicate ${\bf g}$'s $k$-th right-hand side eigenvector.  In the case of undirected networks, $v^{\left(L,k\right)}=v^{\left(R,k\right)}$.

Let $d({\bf g})$ denote (out) degree centrality (so $d_{i}({\bf g})=\sum_{j}g_{ij}$).
Eigenvector centrality corresponds to $v^{(R,1)}({\bf g})$: the first
eigenvector of ${\bf g}$. Also, let $KB({\bf g},q)$ denote Katz--Bonacich
centrality -- defined for $q<1/\lambda_{1}$ by:%
\footnote{See (2.7) in \citet{jackson2008} for additional discussion and background.
This was a measure first discussed by Katz, and corresponds to Bonacich's
definition when both of Bonacich's parameters are set to $q$.%
}
\[
KB({\bf g},q):=\left(\sum_{t=1}^{\infty}\left(q{\bf g}\right)^{t}\right)\cdot{\bf 1}.
\]

It is direct to see that (i) diffusion centrality is proportional
to degree centrality at the extreme at which $T=1$, and (ii) if $q<1/\lambda_{1}$,
then diffusion centrality coincides with Katz--Bonacich centrality
if we set $T=\infty$. We now show that when $q>1/\lambda_{1}$ diffusion
centrality approaches eigenvector centrality as $T$ approaches $\infty$,
which then completes the picture of the relationship between diffusion
centrality and extreme centrality measures.

The difference between the extremes of Katz--Bonacich centrality and
eigenvector centrality depends on whether $q$ is sufficiently small
so that limited diffusion takes place even in the limit of large $T$,
or whether $q$ is sufficiently large so that the knowledge saturates
the network and then it is only relative amounts of saturation that
are being measured.%
\footnote{Saturation occurs when the entries of $\left(\sum_{t=1}^{\infty}\left(q{\bf g}\right)^{t}\right)\cdot{\bf 1}$
diverge (note that in a {[}strongly{]} connected network, if one entry
diverges, then all entries diverge). Nonetheless, the limit vector
is still proportional to a well defined limit vector: the first eigenvector.%
}

\

\begin{thm} \label{DC}

\

\begin{enumerate}
\item Diffusion centrality is proportional to (out) degree when $T=1$:
\[
{DC}\left({\bf g};q,1\right)=qd\left({\bf g}\right).
\]

\item If $q\geq 1/\lambda_{1}$ and $\g$ is aperiodic, then as $T\rightarrow\infty$ diffusion
centrality approximates eigenvector centrality:
\[
\lim_{T\rightarrow\infty}\frac{{DC}\left({\bf g};q,T\right)}{\sum_{t=1}^{T}\left(q\lambda_{1}\right)^{t}}=v^{(R,1)}.
\]

\item For $T=\infty$ and $q<1/\lambda_{1}$, diffusion centrality is Katz--Bonacich
centrality:
\[
{DC}\left({\bf g};q,\infty\right)=KB\left({\bf g},q\right);\ \ q<1/\lambda_{1}.
\]

\end{enumerate}
\end{thm}

This is a result we mention in \citet*{banerjeecdj2013}.
An independent formalization appears in \citet{benzik2014}.%

We also remark on the comparison to another measure: the influence
vector that appears in the DeGroot learning model (see, e.g., \citet{golubj2010}).
That metric captures how influential a node is in a process of social
learning. It corresponds to the (left-hand) unit eigenvector of a
stochasticized matrix of interactions rather than a raw adjacency
matrix. While it might be tempting to use that metric here as well,
we note that it is the right conceptual object to use in a process
of \emph{repeated averaging} through which individuals update opinions
based on averages of their neighbors' opinions. It is thus conceptually
different from the diffusion process that we study.
Nonetheless, one can also define
a variant of diffusion centrality that works for finite iterations
of DeGroot updating.

\

\begin{proof}[{\bf Proof of Theorem \ref{DC}}]  We show the second statement as the others follow directly.

First,  consider any irreducible and aperiodic nonnegative (and hence primitive) $\g$.  If the statement holds for
any arbitrarily close positive and diagonalizable $\g'$ (which are dense in a nonnegative neighborhood of $\g$), then since $\frac{{DC}\left({\bf g};q,T\right)}{\sum_{t=1}^T \left( q \lambda_1\right)^{t}}$ is a continuous function (in a neighborhood of a primitive $\g$, which has a simple first eigenvalue)
as is the first eigenvector, then the statement also holds at $\g$.\footnote{As is shown below,  $\frac{{DC}\left({\bf g};q,T\right)}{\sum_{t=1}^T \left( q \lambda_1\right)^{t}}$ has a well-defined limit, and so this holds also for the limit.  }
Thus, it is enough to prove the result for a positive and diagonalizable $\g$.

We show the following  for a positive and diagonalizable $\g$:
\begin{itemize}

\item  If $q> \lambda_1^{-1}$, then
$$\lim_{T\rightarrow\infty}\frac{{DC}\left({\bf g};q,T\right)}{\sum_{t=1}^T \left( q \lambda_1\right)^{t}}= \lim_{T\rightarrow\infty} \frac{{DC}\left(g;q,T\right)}{\frac{ q \lambda_1-(q \lambda_1)^{T+1}}{1-(q \lambda_1)}}= {v}^{\left(R,1\right)}.$$

\item If $q = \lambda_1^{-1}$, then

$$\lim_{T\rightarrow\infty}\frac{1}{T}{DC}\left({\bf g};\lambda_{1}^{-1},T\right) = {v}^{\left(R,1\right)}.$$
\end{itemize}

\

Let $\widetilde{{\bf g}}={\bf g}/\lambda_{1}$, and normalize the eigenvectors to lie in $\ell_{1}$,
so that the entries in each column of ${\bf V}^{-1}$ and each row of ${\bf V}$ sum to 1.

Let us show the statement for the case where $q=1/\lambda_1$. It is sufficient to show
\[
\lim_{T\rightarrow\infty}\left\Vert \frac{DC\left(\g;\lambda_{1}^{-1},T\right)}{T}-{v}^{(R,1)}\right\Vert =0.
\]
First, note that given the diagonalizable matrix, straightforward calculations show that
\[
DC_{i}\left(\g;\lambda_{1}^{-1},T\right) =\sum_j \sum_{t=1}^T \sum_k  v_i^{(R,k)} v_j^{(L,k)} \widetilde{\lambda}_{k}^{t}.
\]
Thus,
\begin{eqnarray*}
\left|\frac{DC_{i}\left(\g;\lambda_{1}^{-1},T\right)}{T}-v_{i}^{(R,1)}\right| & = &
\left|\frac{ \sum_j \sum_{t=1}^T \sum_{k=1}^n  v_i^{(R,k)} v_j^{(L,k)} \widetilde{\lambda}_{k}^{t}}{T} -v_{i}^{(R,1)} \right| =  \\
& = & \left|\frac{1}{T} \sum_j \sum_{t=1}^T \sum_{k= 2}^n  v_i^{(R,k)} v_j^{(L,k)} \widetilde{\lambda}_{k}^{t} \right|\leq\frac{1}{T}\sum_{t=1}^{T}\sum_{k=2}^{n}1\cdot\underbrace{\left|\sum_{j=1}^{n}v_{j}^{(L,k)}\right|}_{\leq1}
\cdot\left|\widetilde{\lambda}_{k}^{t}\right|\\
 & \leq & \frac{n}{T}\sum_{t=1}^{T}\left|\widetilde{\lambda}_{2}^{t}\right|=\frac{n}{T}\frac{\left|\widetilde{\lambda}_{2}\right|}{1-\left|\widetilde{\lambda}_{2}\right|}\left(1-\left|\widetilde{\lambda}_{2}\right|^{T}\right)\rightarrow0.
\end{eqnarray*}
Since the length of the vector (which is $n$) is unchanging in $T$, pointwise
convergence implies convergence in norm, proving the result.

The final piece repeats the argument for $q>1/\lambda_1$.   It follows that the eigenvalues of $q{\bf g}$ are $\widetilde{\Lambda} = {\rm diag}\left\{\widetilde{\lambda}_1,...,\widetilde{\lambda}_n\right\}$ with $q\lambda_k = \widetilde{\lambda}_k$.  We show
\[
\lim_{T\rightarrow\infty}\left\Vert \frac{DC\left(\g;q,T\right)}{\sum_{t=1}^{T}\left(q\lambda_{1}\right)^{t}}-{v}^{(R,1)}\right\Vert =0.
\]
By similar derivations as above,
\begin{eqnarray*}
\left|\frac{DC_{i}\left(\g;\lambda_{1}^{-1},T\right)}{\sum_{t=1}^{T}\widetilde{\lambda}_{1}^{t}}-v_{i}^{(R,1)}\right| & = &
\left|\frac{ \sum_j \sum_{t=1}^T \sum_{k=1}^n  v_i^{(R,k)} v_j^{(L,k)} \widetilde{\lambda}_{k}^{t}}{\sum_{t=1}^{T}\widetilde{\lambda}_{1}^{t}} -v_{i}^{(R,1)} \right|   \\
 & = &
\left|\frac{ \sum_j \sum_{t=1}^T \sum_{k=2}^n  v_i^{(R,k)} v_j^{(L,k)} \widetilde{\lambda}_{k}^{t}}{\sum_{t=1}^{T}\widetilde{\lambda}_{1}^{t}}  +
\frac{\sum_j \sum_{t=1}^T  v_i^{(R,1)} v_j^{(L,1)}\widetilde{\lambda}_{1}^{t}}{\sum_{t=1}^{T}\widetilde{\lambda}_{1}^{t}}
-v_{i}^{(R,1)} \right|   \\
 & = &
\left|\frac{ \sum_j \sum_{t=1}^T \sum_{k=2}^n  v_i^{(R,k)} v_j^{(L,k)} \widetilde{\lambda}_{k}^{t}}{\sum_{t=1}^{T}\widetilde{\lambda}_{1}^{t}}  +
\frac{ \sum_{t=1}^T  v_i^{(R,1)} \widetilde{\lambda}_{1}^{t}}{\sum_{t=1}^{T}\widetilde{\lambda}_{1}^{t}}
-v_{i}^{(R,1)} \right|   \\
& = & \left|\frac{1}{\sum_{t=1}^{T}\widetilde{\lambda}_{1}^{t}} \sum_j \sum_{t=1}^T \sum_{k= 2}^n  v_i^{(R,k)} v_j^{(L,k)} \widetilde{\lambda}_{k}^{t} \right|\\
& \leq &\frac{1}{\sum_{t=1}^{T}\widetilde{\lambda}_{1}^{t}}\sum_{t=1}^{T}\sum_{k=2}^{n}1\cdot
\left|\sum_{j=1}^{n}v_{j}^{(L,k)}\right|
\cdot\left|\widetilde{\lambda}_{k}^{t}\right|\\
 & \leq & \frac{n}{\sum_{t=1}^{T}\widetilde{\lambda}_{1}^{t}}\sum_{t=1}^{T}\left|\widetilde{\lambda}_{2}^{t}\right|.
\end{eqnarray*}
Note that this last expression converges to 0 since $\widetilde{\lambda}_{1}>1$, and
$\widetilde{\lambda}_{1}>\widetilde{\lambda}_{2}$.\footnote{\label{fn:domination}Note that it is important
that $q\geq 1/\lambda_1$ for this claim, since if $q < 1/ \lambda_1$, then $q \lambda_1 < 1$. In that case, observe that
$$ \frac{\sum_{t=1}^{T}\left|\tilde{\lambda}_{2}\right|^{t}}{\sum_{t=1}^{T}\tilde{\lambda}_{1}^{t}} = \frac{\widetilde{\lambda}_2}{\widetilde{\lambda}_1} \cdot \frac{1-\tilde{\lambda_1}}{1-\tilde{\lambda_2}}$$ by the properties of a geometric sum, which is of constant order.  Thus, higher order terms ($\widetilde{\lambda}_2$, etc.) persistently matter and are not dominated relative to $\sum_t^T \widetilde{\lambda}_1^t$.}
which completes the argument.
\end{proof}

\subsection{Other Proofs}

\

\begin{proof}[ {\bf Proof of Theorem \ref{thm:bench}} ]
\begin{eqnarray*}
\E\left[ {DC}\left({\bf g}(n,p);q,T\right)\right]_i & = & \left[\sum_1^T \E\left[ q^t {\bf g}(n,p)^t\right]\cdot {\bf 1}\right]_i\cr
  & = & \sum_1^T q^t n \E\left[  {\bf g}(n,p)^t\right]_{ij},
\end{eqnarray*}
where the last equality comes from the fact that $\E\left[  {\bf g}(n,p)^t\right]_{ij}=\E\left[  {\bf g}(n,p)^t\right]_{ik}$ for all $i,j,k$ in an Erdos--Renyi random graph.

Next, note that
\[
\E\left[  {\bf g}(n,p)^t\right]_{ij}=
\E\left[   \sum_{k_1, k_2, \ldots, k_{t-1} \in \{1,\ldots,n \}^{t-1} }  g_{ik_1} g_{k_1k_2} \cdots g_{k_{t-1}j} \right]
\]
If all the indexed $g_{..}$'s were distinct, the right hand side of this equation would simply be
$n^{t-1}p^t$. However, in the summand sometimes terms repeat. For example, if there were exactly $x$ repetitions, the probability of getting the walk would be $p^{t-x}$ instead of $p^t$. Thus, it follows directly that
\[
\E\left[  {\bf g}(n,p)^t\right]_{ij}\geq n^{t-1} p^t
\]
and so
\begin{eqnarray*}
\E\left[ {DC}\left({\bf g}(n,p);q,T\right)\right]_i & = & \sum_1^T q^t n \E\left[  {\bf g}(n,p)^t\right]_{ij}\cr
& \geq & \sum_1^T q^t n^t p^t  = npq \frac{1-(npq)^T}{1-npq}
\end{eqnarray*}

Note also, that
\[
\E\left[   \sum_{k_1, k_2, \ldots, k_t \in \{1,\ldots,n \}^t }  g_{ik_1} g_{k_1k_2} \cdots g_{k_{t-1} j} \right]
\leq n^{t-1} p^t + t n^{t-2} p^{t-1} + t^2 n^{t-3} p^{t-2} + \ldots + t^t .
\]
This last inequality is a very loose upper bound generated by setting a loose upper bound on how many $g_{..}$'s could conceivably repeat, and then putting in the expression that would ensue if
they did repeat.  Despite how loose the bound is, it suffices for our purposes.

Given that $t\leq T< pn$, it follows that
\begin{align*}
\E\left[   \sum_{k_1, k_2, \ldots, k_t \in \{1,\ldots,n \}^t }  g_{ik_1} g_{k_1k_2} \cdots g_{k_{t-1} j} \right]
& \leq n^{t-1} p^t \left(1+ \frac{t}{pn} + \left(\frac{t}{pn}\right)^2 \ldots + \left(\frac{t}{pn}\right)^t\right) \\
& =  n^{t-1} p^t \left(\frac{1 - \left(\frac{t}{pn}\right)^t}{1- \left(\frac{t}{pn}\right)}\right).
\end{align*}
Thus,
\[
\E\left[  {\bf g}(n,p)^t\right]_{ij}\leq n^{t-1} p^t \frac{ 1}{1- \frac{T}{pn}}.
\]

Since $ T<< {pn}$
it follows that (here $o(1)$ is with respect to $n$):
\begin{eqnarray*}
\E\left[ {DC}\left({\bf g}(n,p);q,T\right)\right]_i & = & \sum_1^T q^t n \E\left[  {\bf g}(n,p)^t\right]_{ij}\cr
& \leq & \sum_1^T q^t  n^t p^t (1+o(1))  = npq \frac{1-(npq)^T}{1-npq} (1+ o(1)).
\end{eqnarray*}

The theorem follows directly.  \end{proof}

\

\begin{proof}[ {\bf Proof of Theorem \ref{prop:cov}} ]
Recall that
$
\bfH= \sum_{t=1}^{T}\left(q{\bf g}\right)^{t}
$ and
$
DC =
\left(\sum_{t=1}^{T}\left(q{\bf g}\right)^{t}\right)\cdot{\bf 1}
$
and so
$$
DC_i = \sum_{j}H_{ij}.
$$

Additionally,
$$\cov (DC,H_{\cdot,j}) =
\sum_i  \left(DC_i- \sum_k \frac{DC_k}{n}\right)\left(H_{ij} - \sum_k \frac{H_{kj}}{n}\right).
$$
Thus
$$\sum_j\cov (DC,H_{\cdot,j}) =
\sum_i  \left(DC_i- \sum_k \frac{DC_k}{n}\right)\left(\sum_j H_{ij} - \sum_k \frac{\sum_j H_{kj}}{n}\right),
$$
implying
$$\sum_j\cov (DC,H_{\cdot,j}) =
\sum_i  \left(DC_i- \sum_k \frac{DC_k}{n}\right)\left(DC_i- \sum_k \frac{DC_k}{n}\right)
=\var(DC),
$$
which completes the proof.
\end{proof}

\

\begin{proof}[ {\bf Proof of Corollary \ref{cor:bench}} ]
To see (1), first note that
$x \frac{1-x^T}{1-x} \rightarrow 0 $ if $x\rightarrow 0$,
and that
$x \frac{1-x^T}{1-x} \rightarrow x \frac{x^T}{x}   \rightarrow \infty $ if $x\rightarrow \infty$.
Replacing $x$ with $npq$ and then applying Theorem \ref{thm:bench} yields the result under (a) and (b), respectively.

To see (2), we consider the case in which $q> 1/(\E[\lambda_1])^{1-\varepsilon}$, which of course is equivalent to $npq > (np)^\varepsilon$.
This is the case under which (b) applies.  This also implies the result in (a), since if the conclusion of (a) holds for such a $q$ it will also hold for all lower $q$, given that $DC$ is monotone in $q$.

Again, since $npq>1$, it follows that if $T$ is growing, then
\[
\E\left[ {DC}\left({\bf g}(n,p);q,T\right)\right]_i
\rightarrow
npq \frac{1-(npq)^T}{1-npq}
\rightarrow (npq)^T.
\]
So, to have
\[
\E\left[ {DC}\left({\bf g}(n,p);q,T\right)\right]_i \geq kn
\]
for some $k>0$, it is sufficient that $(npq)^T\geq kn $,
or
\[
T\geq \frac{\log(n)+\log(k)}{\log{np} + \log(q)} \rightarrow \frac{\log(n)}{\log{np}} \sim \E[Diam \left({\bf g}(n,p)\right)],
\]
where the last comparison is a property of Erdos--Renyi random networks given that $\frac{1-\varepsilon}{\sqrt{n}}\geq p\geq (1+\varepsilon) \frac{\log(n)}{n}$, and so this establishes (b).
From the analogous calculation, if $T$ is below $\frac{\log(n)}{\log{np}}$, then $\E\left[ {DC}\left({\bf g}(n,p);q,T\right)\right]_i \leq kn$ for any $k$, and so (a) follows.
\end{proof}

\

\begin{proof}[{\bf Proof of Theorem \ref{prop:rank}}]
Again, we prove the result for a positive diagonalizable $\g$, noting that it then holds for any (nonnegative) $\g$.

Again, let $\g$ be written as
\[
\g={\bf V}\Lambda {\bf V}^{-1}.
\]
Also, let $\tilde{\lambda}_{k}=q\lambda_{k}$.
It then follows that we can write
\[
{\bf H}=\sum_{t=1}^{T}\left(q{\bf g}\right)^{t} =\sum_{t=1}^{T}\left(\sum_{k=1}^n v_{i}^{\left(R,k\right)}v_{j}^{\left(L,k\right)}\widetilde{\lambda}_{k}^{t}\right).
\]
By the ordering of the eigenvalues from largest to smallest in magnitude,
\begin{eqnarray*}
{\bf H}_{\cdot,j} & = & \sum_{t=1}^{T}\left[{ v}^{(R,1)}v_{j}^{(L,1)}\tilde{\lambda}_{1}^{t}+{ v}^{(R,2)}v_{j}^{(L,2)}\tilde{\lambda}_{2}^{t}+O\left(\left|\tilde{\lambda}_{2}\right|^{t}\right)\right]\\
 & = & \sum_{t=1}^{T}\left[{ v}^{(R,1)}v_{j}^{(L,1)}\tilde{\lambda}_{1}^{t}+O\left(\left|\tilde{\lambda}_{2}\right|^{t}\right)\right]\\
 & = & { v}^{(R,1)}v_{j}^{(L,1)}\sum_{t=1}^{T}\tilde{\lambda}_{1}^{t}+O\left(\sum_{t=1}^{T}\left|\tilde{\lambda}_{2}\right|^{t}\right).
\end{eqnarray*}
So, since the largest eigenvalue is unique, it follows that
\[
\frac{{\bf H}_{\cdot,j}}{\sum_{t=1}^{T}\tilde{\lambda}_{1}^{t}}={ v}^{(R,1)}v_{j}^{(L,1)}+O\left(\frac{\sum_{t=1}^{T}\left|\tilde{\lambda}_{2}\right|^{t}}{\sum_{t=1}^{T}\tilde{\lambda}_{1}^{t}}\right).
\]
Note that the last expression converges to 0 since $\widetilde{\lambda}_{1}>1$, and
$\widetilde{\lambda}_{1}>\widetilde{\lambda}_{2}$.
Thus,
\[
\frac{{\bf H}_{\cdot,j}}{\sum_{t=1}^{T}\tilde{\lambda}_{1}^{t}}\rightarrow { v}^{(R,1)}v_{j}^{(L,1)}
\]
for each $j$.
This completes the proof since each column of ${\bf H}$ is proportional to ${ v}^{(R,1)}$ in the limit, and thus has the correct ranking for large enough $T$.\footnote{The discussion in Footnote \ref{fn:domination} clarifies why $q > 1/\lambda_1$ is required for the argument.}
Note that the ranking is up to ties, as the ranking of tied entries may vary arbitrarily along the sequence.  That is, if
${ v}^{(R,1)}_{i}={ v}^{(R,1)}_{\ell}$, then $j$'s ranking over $i$ and $\ell$ could vary arbitrarily with $T$, but their rankings will be correct relative to any other entries with higher or lower eigenvector centralities.
\end{proof}

\newpage

\section{Extension of Phase 1 results}\label{sec:phase1_ext}
This section extends the descriptive analysis from the Phase 1 network data on 33 villages. 
We repeat all of our analyses with OLS specifications instead of Poisson specifications. Additionally, we include a Post-LASSO estimation which conducts a LASSO to select which variables best explain our outcome of interest (number of nominations) and then does a post-estimation to recover consistent parameter estimates.

\setcounter{table}{0}
\renewcommand{\thetable}{B.\arabic{table}}

\begin{table}[!h]
\centering
\caption{Factors predicting nominations}\label{tab:predict1_OLS}
\scalebox{0.9}{\begin{threeparttable}
\begin{tabular}{lccccc} \hline
 & (1) & (2) & (3) & (4) & (5) \\
 & Event & Event & Event & Event & Event \\ \hline
 &  &  &  &  &  \\
Diffusion Centrality & 0.285 &  &  &  &  \\
 & (0.060) &  &  &  &  \\
Degree Centrality &  & 0.250 &  &  &  \\
 &  & (0.061) &  &  &  \\
Eigenvector Centrality &  &  & 0.283 &  &  \\
 &  &  & (0.064) &  &  \\
Leader &  &  &  & 0.436 &  \\
 &  &  &  & (0.168) &  \\
Geographic Centrality &  &  &  &  & -0.025 \\
 &  &  &  &  & (0.038) \\
 &  &  &  &  &  \\
 Observations & 6,466 & 6,466 & 6,466 & 6,466 & 6,466 \\ \hline
\end{tabular}

\begin{tabular}{lccccc} \hline
 & (1) & (2) & (3) & (4) & (5) \\
 & Loan & Loan & Loan & Loan & Loan \\ \hline
 &  &  &  &  &  \\
Diffusion Centrality & 0.391 &  &  &  &  \\
 & (0.071) &  &  &  &  \\
Degree Centrality &  & 0.367 &  &  &  \\
 &  & (0.065) &  &  &  \\
Eigenvector Centrality &  &  & 0.378 &  &  \\
 &  &  & (0.074) &  &  \\
Leader &  &  &  & 0.653 &  \\
 &  &  &  & (0.224) &  \\
Geographic Centrality &  &  &  &  & -0.045 \\
 &  &  &  &  & (0.029) \\
 &  &  &  &  &  \\
 Observations & 6,466 & 6,466 & 6,466 & 6,466 & 6,466 \\ \hline
\end{tabular}

\begin{tablenotes}Notes: This table uses data from the Phase 1 dataset. It reports estimates of OLS regressions where the outcome variable is the expected number of nominations under the event question. Panel A presents results for the event question, and Panel B presents results for the loan question. 
Degree centrality, eigenvector centrality, and diffusion centrality, $DC\left(\g; 1/\E[\lambda_1],\E[Diam(\g(n,p))]\right)$, are normalized by their standard deviations. Standard errors (clustered at the village level) are reported in parentheses.
\end{tablenotes}
\end{threeparttable}}
\end{table}

\clearpage

\begin{table}[!h]
\centering
\scalebox{0.9}{\begin{threeparttable}
\caption{Factors predicting nominations}\label{tab:predict_biv_OLS}
\begin{tabular}{lcccccc} \hline
 & (1) & (2) & (3) & (4) & (5) & (6) \\
 & Event & Event & Event & Event & Event & Event \\ \hline
 &  &  &  &  &  &  \\
Diffusion Centrality & 0.303 & 0.161 & 0.269 & 0.285 & 0.173 & 0.285 \\
 & (0.091) & (0.087) & (0.061) & (0.060) & (0.107) & (0.060) \\
Degree Centrality & -0.020 &  &  &  & -0.013 &  \\
 & (0.066) &  &  &  & (0.068) &  \\
Eigenvector Centrality &  & 0.138 &  &  & 0.137 &  \\
 &  & (0.095) &  &  & (0.095) &  \\
Leader &  &  & 0.294 &  &  &  \\
 &  &  & (0.174) &  &  &  \\
Geographic Centrality &  &  &  & -0.026 &  &  \\
 &  &  &  & (0.039) &  &  \\
 &  &  &  &  &  &  \\
Observations & 6,466 & 6,466 & 6,466 & 6,466 & 6,466 & 6,466 \\
 Post-LASSO &  &  &  &  &  & \checkmark \\ \hline
\end{tabular}

\begin{tabular}{lcccccc} \hline
 & (1) & (2) & (3) & (4) & (5) & (6) \\
 & Loan & Loan & Loan & Loan & Loan & Loan \\ \hline
 &  &  &  &  &  &  \\
Diffusion Centrality & 0.310 & 0.266 & 0.366 & 0.391 & 0.175 & 0.310 \\
 & (0.112) & (0.089) & (0.071) & (0.071) & (0.124) & (0.112) \\
Degree Centrality & 0.091 &  &  &  & 0.098 & 0.091 \\
 & (0.079) &  &  &  & (0.079) & (0.079) \\
Eigenvector Centrality &  & 0.138 &  &  & 0.144 &  \\
 &  & (0.089) &  &  & (0.087) &  \\
Leader &  &  & 0.461 &  &  &  \\
 &  &  & (0.229) &  &  &  \\
Geographic Centrality &  &  &  & -0.045 &  &  \\
 &  &  &  & (0.030) &  &  \\
 &  &  &  &  &  &  \\
Observations & 6,466 & 6,466 & 6,466 & 6,466 & 6,466 & 6,466 \\
 Post-LASSO &  &  &  &  &  & \checkmark \\ \hline
\end{tabular}

\begin{tablenotes}
Notes: This table uses data from the Phase 1 dataset. It reports estimates of OLS regressions where the outcome variable is the expected number of nominations. Panel A presents results for the event question, and Panel B presents results for the loan question. 
Degree centrality, eigenvector centrality, and diffusion centrality, $DC\left(\g; 1/\E[\lambda_1],\E[Diam(\g(n,p))]\right)$, are normalized by their standard deviations. Column (6) uses a post-LASSO procedure where in the first stage LASSO is implemented to select regressors and in the second stage the regression in question is run on those regressors. Omitted terms indicate they were not selected in the first stage. Standard errors (clustered at the village level) are reported in parentheses.
\end{tablenotes}
\end{threeparttable}}
\end{table}

\clearpage

\section{Extension of experiment analysis}\label{sec:expt_ext}
This section extends the analysis of the experiment results to using four instruments. 

\setcounter{table}{0}
\renewcommand{\thetable}{C.\arabic{table}}

\clearpage
\begin{table}[!h]
\centering
\caption{Calls received by treatment} \label{tab:mainExpt_2_2}
\scalebox{0.8}{\begin{threeparttable}
\begin{tabular}{lccccc} \hline
 & (1) & (2) & (3) & (4) & (5) \\
 & RF & OLS & IV 1: First Stage & IV 2: First Stage & IV: Second Stage \\
 & Calls Received & Calls Received & At least 1 Gossip & At least 1 Elder & Calls Received \\ \hline
 &  &  &  &  &  \\
Gossip Treatment & 4.559 &  & 0.795 & 0.430 &  \\
 & (3.121) &  & (0.0753) & (0.108) &  \\
5 Gossip Seeds & -1.785 &  & -0.303 & -0.206 &  \\
 & (5.290) &  & (0.110) & (0.153) &  \\
Elder Treatment & 2.279 &  & 0.370 & 0.872 &  \\
 & (2.424) &  & (0.106) & (0.0685) &  \\
5 Elder Seeds & -6.798 &  & -0.272 & -0.0578 &  \\
 & (3.487) &  & (0.149) & (0.100) &  \\
At least 1 Gossip &  & 3.786 &  &  & 8.063 \\
 &  & (1.858) &  &  & (3.845) \\
At least 1 Elder &  & 0.792 &  &  & -3.684 \\
 &  & (2.056) &  &  & (2.266) \\
 &  &  &  &  &  \\
Observations & 212 & 212 & 212 & 212 & 212 \\
 Control Group Mean & 8.019 & 5.846 & 0.389 & 0.183 & 5.496 \\ \hline
\end{tabular}

\begin{tabular}{lccccc} \hline
 & (1) & (2) & (3) & (4) & (5) \\
 & RF & OLS & IV 1: First Stage & IV 2: First Stage & IV: Second Stage \\
 & $\frac{\mbox{Calls Received}}{\mbox{Seeds}}$ & $\frac{\mbox{Calls Received}}{\mbox{Seeds}}$ & At least 1 Gossip & At least 1 Elder & $\frac{\mbox{Calls Received}}{\mbox{Seeds}}$ \\ \hline
 &  &  &  &  &  \\
Gossip Treatment & 1.593 &  & 0.795 & 0.430 &  \\
 & (1.030) &  & (0.0753) & (0.108) &  \\
5 Gossip Seeds & -1.083 &  & -0.303 & -0.206 &  \\
 & (1.348) &  & (0.110) & (0.153) &  \\
Elder Treatment & 0.622 &  & 0.370 & 0.872 &  \\
 & (0.770) &  & (0.106) & (0.0685) &  \\
5 Elder Seeds & -1.430 &  & -0.272 & -0.0578 &  \\
 & (0.912) &  & (0.149) & (0.100) &  \\
At least 1 Gossip &  & 0.952 &  &  & 2.169 \\
 &  & (0.501) &  &  & (1.043) \\
At least 1 Elder &  & 0.309 &  &  & -0.676 \\
 &  & (0.511) &  &  & (0.578) \\
 &  &  &  &  &  \\
Observations & 212 & 212 & 212 & 212 & 212 \\
 Control Group Mean & 1.953 & 1.451 & 0.389 & 0.183 & 1.186 \\ \hline
\end{tabular}

\begin{tablenotes}Notes: This table uses data from the Phase 2 experimental dataset. Panel A uses the number of calls received as the outcome variable. Panel B normalizes the number of calls received by the number of seeds, 3 or 5, which is randomly assigned. For both panels, Column (1) shows the reduced form results of regressing number of calls received on dummies for gossip treatment and elder treatment. Column (2) regresses number of calls received on the dummies for if at least 1 gossip was hit and for if at least 1 elder was hit in the village. Columns (3) and (4) show the first stages of the instrumental variable regressions, where the dummies for ``at least 1 gossip'' and ``at least 1 elder'' are regressed on the exogenous variables: gossip treatment dummy, 5 gossip seeds dummy, elder treatment dummy, 5 elder seeds dummy. Column (5) shows the second stage of the IV; it regresses the number of calls received on the dummies for if at least 1 gossip was hit and if at least 1 elder was hit, both instrumented by treatment status of the village (gossip treatment or not, elder treatment or not) and seed number dummies for the village (5 gossip seeds or not, 5 elder seeds or not). All columns control for number of gossips, number of elders and number of seeds. For columns (1), (3), and (4)  the control group mean is calculated as the mean expectation of the outcome variable when the treatment is ``random''. For columns (2) and (5), the control group mean is calculated as the mean expectation of the outcome variable when no gossips or elders are reached. The control group mean for the second stage IV is calculated using IV estimates. Robust standard errors are reported in parentheses.
\end{tablenotes}
\end{threeparttable}} 
\end{table}

\clearpage

\section{Experiment Analysis with Broadcast Village}\label{sec:broadcast}
This section repeats our main experimental analyses but includes the broadcast village where the poster was made by one of the seeds.

\setcounter{table}{0}
\renewcommand{\thetable}{D.\arabic{table}}

\begin{table}[!h]
\centering
\caption{Calls received by treatment}\label{tab:mainExpt_flyer}
\scalebox{0.8}{\begin{threeparttable}
\begin{tabular}{lccccc} \hline
 & (1) & (2) & (3) & (4) & (5) \\
 & RF & OLS & IV 1: First Stage & IV 2: First Stage & IV: Second Stage \\
 & Calls Received & Calls Received & At least 1 Gossip & At least 1 Elder & Calls Received \\ \hline
 &  &  &  &  &  \\
Gossip Treatment & 2.266 &  & 0.636 & 0.331 &  \\
 & (3.116) &  & (0.0660) & (0.0821) &  \\
Elder Treatment & -2.809 &  & 0.220 & 0.846 &  \\
 & (2.577) &  & (0.0807) & (0.0502) &  \\
At least 1 Gossip &  & 5.005 &  &  & 6.122 \\
 &  & (2.210) &  &  & (4.532) \\
At least 1 Elder &  & -0.619 &  &  & -4.914 \\
 &  & (2.472) &  &  & (2.628) \\
 &  &  &  &  &  \\
Observations & 213 & 213 & 213 & 213 & 213 \\
 Control Group Mean & 9.534 & 6.277 & 0.400 & 0.180 & 7.971 \\ \hline
\end{tabular}

\begin{tabular}{lccccc} \hline
 & (1) & (2) & (3) & (4) & (5) \\
 & RF & OLS & IV 1: First Stage & IV 2: First Stage & IV: Second Stage \\
 & $\frac{\mbox{Calls Received}}{\mbox{Seeds}}$ & $\frac{\mbox{Calls Received}}{\mbox{Seeds}}$ & At least 1 Gossip & At least 1 Elder & $\frac{\mbox{Calls Received}}{\mbox{Seeds}}$ \\ \hline
 &  &  &  &  &  \\
Gossip Treatment & 0.591 &  & 0.636 & 0.331 &  \\
 & (0.841) &  & (0.0660) & (0.0821) &  \\
Elder Treatment & -0.646 &  & 0.220 & 0.846 &  \\
 & (0.738) &  & (0.0807) & (0.0502) &  \\
At least 1 Gossip &  & 1.359 &  &  & 1.535 \\
 &  & (0.644) &  &  & (1.179) \\
At least 1 Elder &  & -0.162 &  &  & -1.164 \\
 &  & (0.691) &  &  & (0.748) \\
Constant &  &  &  & 0.109 &  \\
 &  &  &  & (0.160) &  \\
 &  &  &  &  &  \\
Observations & 213 & 213 & 213 & 213 & 213 \\
 Control Group Mean & 2.452 & 1.595 & 0.400 & 0.180 & 2.048 \\ \hline
\end{tabular}

\begin{tablenotes}Notes: This table uses data from the Phase 2 experimental dataset. Panel A uses the number of calls received as the outcome variable. Panel B normalizes the number of calls received by the number of seeds, 3 or 5, which is randomly assigned. For both panels, Column (1) shows the reduced form results of regressing number of calls received on dummies for gossip treatment and elder treatment. Column (2) regresses number of calls received on the dummies for if at least 1 gossip was hit and for if at least 1 elder was hit in the village. Columns (3) and (4) show the first stages of the instrumental variable regressions, where the dummies for ``at least 1 gossip'' and ``at least 1 elder'' are regressed on the exogenous variables: gossip treatment dummy and elder treatment dummy. Column (5) shows the second stage of the IV; it regresses the number of calls received on the dummies for if at least 1 gossip was hit and if at least 1 elder was hit, both instrumented by treatment status of the village (gossip treatment or not, elder treatment or not). All columns control for number of gossips, number of elders, and number of seeds. For columns (1), (3), and (4)  the control group mean is calculated as the mean expectation of the outcome variable when the treatment is ``random''. For columns (2) and (5), the control group mean is calculated as the mean expectation of the outcome variable when no gossips or elders are reached. The control group mean for the second stage IV is calculated using IV estimates. Robust standard errors are reported in parentheses.
\end{tablenotes}
\end{threeparttable}}
\end{table}

\clearpage

\begin{table}[!h]
\centering
\caption{Calls received by seed type}\label{tab:seedType_OLS_flyer}
\scalebox{0.65}{\begin{threeparttable}
\begin{tabular}{lcccccc} \hline
 & (1) & (2) & (3) & (4) & (5) & (6) \\
 & Calls Received & Calls Received & Calls Received & $\frac{\mbox{Calls Received}}{\mbox{Seeds}}$ & $\frac{\mbox{Calls Received}}{\mbox{Seeds}}$ & $\frac{\mbox{Calls Received}}{\mbox{Seeds}}$ \\ \hline
 &  &  &  &  &  &  \\
At least 1 Gossip & 12.89 & 13.02 &  & 3.751 & 3.871 &  \\
 & (7.225) & (8.157) &  & (2.282) & (2.584) &  \\
At least 1 Elder & -3.371 & -3.321 &  & -1.012 & -0.962 &  \\
 & (5.155) & (4.946) &  & (1.547) & (1.456) &  \\
At least 1 High \textit{DC} Seed &  & -0.485 & 2.262 &  & -0.478 & 0.342 \\
 &  & (4.803) & (3.834) &  & (1.515) & (1.189) \\
 &  &  &  &  &  &  \\
Observations & 69 & 69 & 69 & 69 & 69 & 69 \\
 Control Group Mean & 4.840 & 4.840 & 8.828 & 1.101 & 1.101 & 2.433 \\ \hline
\end{tabular}

\begin{tablenotes}Notes: This table uses data from the Phase 2 experimental and network dataset. The table presents OLS regressions of number of calls received (and number of calls received normalized by the number of seeds, 3 or 5, which is randomly assigned) on characteristics of the set of seeds. High $DC$ refers to a seed being above the mean by one standard deviation of the centrality distribution. All columns control for total number of gossips, number of elders, and number of seeds. For columns (1), (2), (4), and (5), the control group mean is calculated as the mean expectation of the outcome variable when no gossips or elders are reached. For columns (3) and (6), the control group mean is calculated as the mean expectation of the outcome variable when no high $DC$ seeds are reached. Robust standard errors are reported in parentheses.  
\end{tablenotes}
\end{threeparttable}}
\end{table}

\clearpage

\end{document}